\newtheorem{Satz}{Theorem}[section]
\newtheorem{Prop}[Satz]{Proposition}
\newcommand{\vectornorm}[1]{\left|\left|#1\right|\right|}
\newcommand{\R}{\ensuremath{\mathds{R}}}		
\newcommand{\s}{\ensuremath{\mathfrak{s}}}		
\newcommand{\n}{\ensuremath{\mathfrak{n}}}		
\DeclareMathOperator*{\argmin}{argmin}		    
\newcommand{\KLD}{D_{\text{KL}}}				
\newcommand{\esi}{\hat{\Sigma}}					
\title{Explorative Data Analysis for Changes in Neural Activity}
\author{Duncan A.J.~Blythe\footnote{Department of Computer Science, Berlin Institute of Technology,Sekretariat FR 6-9
Franklinstr. 28/29
10587 Berlin}  \footnote{Bernstein Centre for Computational Neuroscience, Humboldt-Universit\"at zu Berlin
Philippstr. 13, Haus 6
10115 Berlin}, Frank C. Meinecke\footnotemark[1], Paul von B\"unau\footnotemark[1] and\\
Klaus-Robert M\"uller\footnotemark[1] \footnote{Department of Brain and Cognitive Engineering, Korea University, Anam-dong, Seongbuk-gu, Seoul 136-713, Korea
} }
\date{\today}
\begin{document}
\pagestyle{empty}
\maketitle

\begin{abstract}
Neural recordings are nonstationary time series, i.e.~their properties typically change over time. Identifying specific changes, e.g.~those induced by a learning task, can shed light on the underlying neural processes. However, such {\em changes of interest} are often masked by strong {\em unrelated changes}, which can be of physiological origin or due to measurement artifacts.
We propose a novel algorithm for disentangling such different causes of non-stationarity and in this manner enable better neurophysiological interpretation for a wider set of experimental paradigms. A key ingredient is the repeated application of Stationary Subspace Analysis (SSA) using different temporal scales. The usefulness of our explorative approach is demonstrated in simulations, theory and EEG experiments with 80 Brain-Computer-Interfacing (BCI) subjects.

 \end{abstract}

%
%
\section{Introduction}


In analysing multivariate time series, as e.g. recorded in
neurophysiological experiments, we face a challenge: artifacts,
signals resulting from different types of brain activity -- task-relevant and
task-irrelevant -- and changes thereof are observed as a
highly variable stream of data.

Some of this variability can be attributed to noise \cite{EEG_chaos}, some to
learning and plasticity \cite{Mandelblatt:2011q} or also to unknown latent variables \cite{neuralVariabilityPremotor}; in other
words the data exhibits changes on many scales. Explorative data
analysis methods such as PCA \cite{originalPCA}, ICA \cite{Comon1994287} and other projection methods can
contribute to disentangling confounding trends from the data \cite{LinearEEG}. However they generally assume an
underlying stationary distribution of the data, so if distribution
changes occur, say, due to learning processes or due to lapses of
attention, they will yield suboptimal results.


This paper addresses the question of finding \emph{changes} in
multivariate neural data which occur over time; we propose a model, that enables for the
first time the extraction of both global behavioral condition changes as
well as experimental condition specific changes. 
To further illustrate this abstract scenario, a concrete example from
the field of Brain Computer Interfacing is as follows; the BCI
user is instructed to perform a certain number of motor imagery
commands, for example, imaginations of left hand, right hand and foot
movements. Because the user \emph{learns} to use the BCI over time, the
data in each of these conditions may change over time: these are
changes \emph{specific} to each of the conditions\footnote{Users can
  in fact learn different types of imaginations at different speeds,
  similarly to the ability of a tennis player to learn his fore- and
  backhand strikes to differing levels of proficiency}. Moreover, often users become
tired while using the BCI. This results in higher levels of alpha
activity ($\approx$7-15Hz) present in \emph{all} conditions as the
experiment progresses \cite{oai:eprints.pascal-network.org:3317}. This increase in alpha would correspond to a
\emph{background} change in neural activity.  We would like to model
these two types of changes separately so they may be examined in
isolation.

%

In principle, one might think that the task of finding changes in the
time domain of a dataset has already been solved; one may simply optimize an
objective function which measures change over time on some subset of
the data, as was proposed for Stationary Subspace Analysis (SSA)
\cite{PRL:SSA:2009:DISCRIM}. However, in practice, these changes may not
\emph{all} be relevant to the data analytic task.  This is illustrated
in Figure~\ref{fig:illustration}; suppose the data is split into
experimental conditions and we are interested in changes which are
particular to only \emph{one} of these conditions. The first time
series displays data from all conditions with the data from the target
condition of interest highlighted. We see that the condition of
interest displays a weak change, not contained in the other
conditions. However, we also observe that there are other
\emph{stronger} changes (for example, related to artifacts) in all
conditions. Fortunately, the weaker change of interest, which is
specific to the highlighted condition, has a different \emph{origin}
than the changes affecting all conditions: as we will see below, this
situation will play an important role in our approach below\footnote{In high dimensional data setups, that these different types of changes have the same origin has very low probability. Moreover, neurophysiological considerations imply that the two sources of change are more likely to originate from different points in the brain.}.

Scenarios where similarly complex changes are hidden in the data arise 
in many experimental paradigms in neurophysiology \cite{timeRelatedAudio, BCIadapt}. So far this fact has
been ignored since there was no sound algorithmic solution to address
the issue. The presented framework therefore aims to contribute to a
better understanding of complex experimental paradigms which involve
distributional changes \emph{by construction}, e.g. paradigms focusing on
learning and plasticity \cite{Mandelblatt:2011q}.

We present our new model more formally in Section~\ref{sec:model} and
outline a method for removing the background changes in
Section~\ref{sec:public_method}, for which we provide theoretical
guarantees, in Section~\ref{sec:con_proof}. We then evaluate the
quality of the estimation method in simulations, in
Section~\ref{sec:simulations}.  Lastly, the analysis of the BCI data
sets is covered in Section~\ref{sec:application}.

\begin{figure}
\begin{center}
\includegraphics[width=120mm,clip=true, trim=1mm 52mm 4mm 10mm]{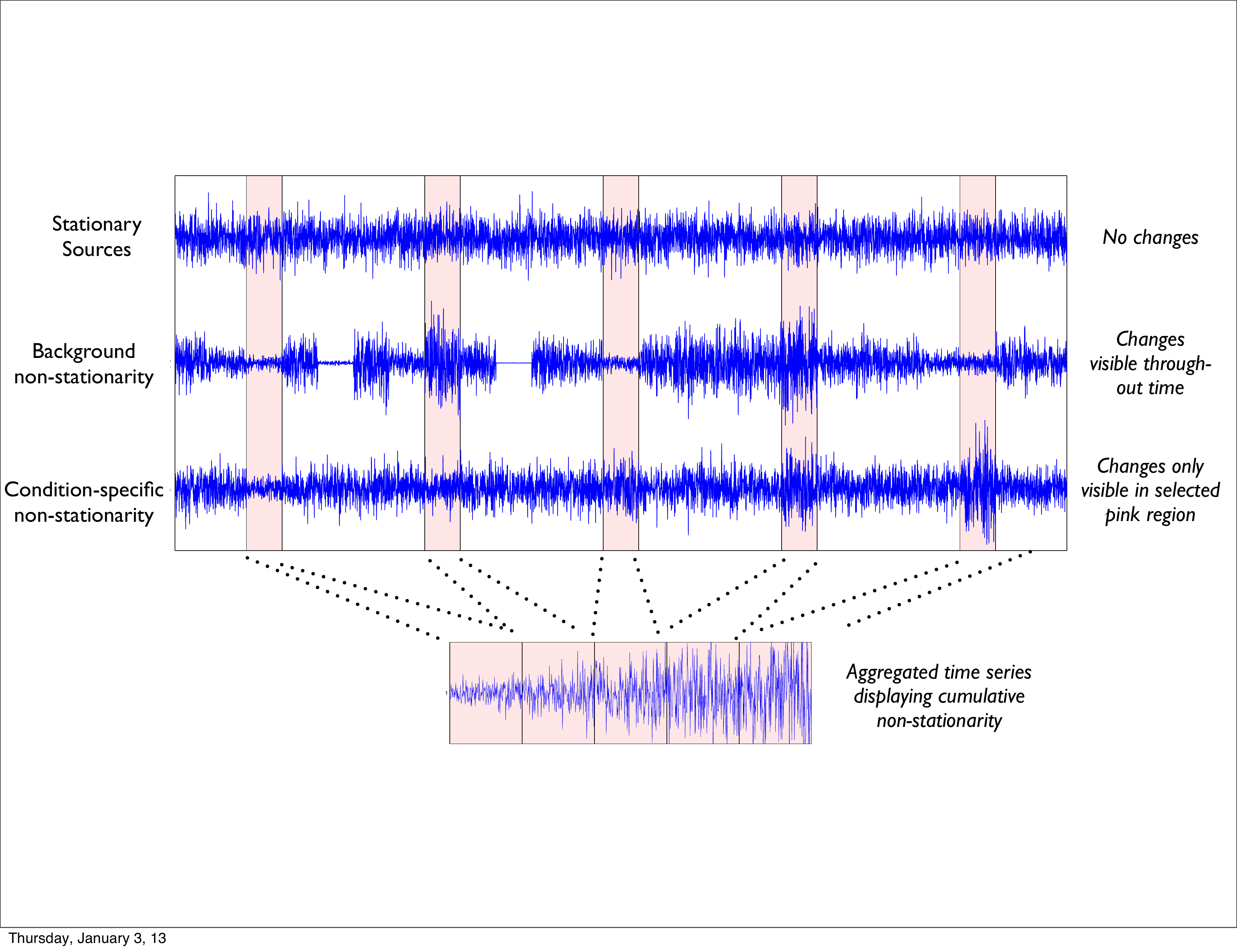}
\end{center}
\caption{The figure illustrates the problem setting: changes specific to the data in one condition are masked by stronger changes taking place over all conditions. The top time-series displays the data from all conditions, with a selected target condition marked in red. 
\label{fig:illustration}}
\end{figure}

\section{Model and Parameter Estimation}
\label{sec:model}

In this section, we introduce the generative model underlying our approach and a method for estimating its parameters, 
as well as theoretical results that provide us with performance guarantees.

\subsection{Generative Model}
\label{sec:mod}

The generative model is an extension of the SSA linear mixing model for a multivariate non-stationary time series  \cite{PRL:SSA:2009:DISCRIM}\footnote{Related mixing models often applied in neuroscience include the Independent Component Analysis model (ICA) and the Principal Component Analysis model (PCA). For background on linear mixing models and source separation see \cite{Comon1994287} and \cite{PRL:SSA:2009:DISCRIM}.}. 
In this model, the observed multivariate data, $\mathbf{x}(t)$, is generated as a linear superposition of two groups of latent 
(only indirectly observable) sources. The first group, $\mathbf{s^\s}(t) = [s_1(t), \dots, s_d(t)]^\top$, are stationary, and the 
second group, $\mathbf{s^\n}(t) = [s_{d+1}(t), \dots, s_{D}(t)]^\top$, are non-stationary. In this context, stationarity of a 
time series is defined as the first two moments (mean and spatial covariance) being constant over time; so-called "weak"-stationarity is usually defined to include the temporal cross-covariance being constant over time, however, in the current contribution we concentrate on the spatial covariance (at time-lag $0$)\footnote{In this publication we consider time-series with vanishing autocorrelations; thus wherever "covariance" is printed, we mean to refer to a spatial covariance.}. 
Moreover, for the purpose of our final EEG analysis (see Section~\ref{sec:application}), we will ignore the mean (since the data are high pas filtered) and we will focus only on changes in the covariance matrix, which equates
to the signal power when projected to a univariate time-series. 
The observed $D$-variate data, $\mathbf{x}(t)$, is generated as a linear transformation of the two groups of stationary and non-stationary sources by an
unknown square mixing matrix $A$, given in terms of the $d \times D$ and $(D-d) \times D$ rectangular matrices $A^\s$ and $A^\n$.  
\begin{eqnarray}
	\mathbf{x}(t) &=& A\mathbf{s}(t) \nonumber \\
					&=&  \begin{bmatrix} A^\s& A^\n \end{bmatrix} \begin{bmatrix} \mathbf{s^\s}(t) \\ 
						  \mathbf{s^\n}(t) \end{bmatrix} \label{eq:SSA}.
\end{eqnarray}
The matrix $A$ is called mixing matrix because the entry $A_{ij}$ determines how the $j$-th source contributes to the 
$i$-th dimension of the observed data. If the sensors corresponding to the measurements are spatially distributed, 
then the columns of the mixing matrix can be interpreted visually as patterns, e.g.\ scalp maps in EEG analysis.
The columns of the matrices $A^\s$ and $A^\n$ span the stationary and non-stationary subspaces respectively. 
Note that the column spaces of these matrices need not be orthogonal.
The aim of SSA is to invert this mixing, i.e.\ to find a demixing matrix which allows us to recover the two 
groups of sources from the recorded mixed signals. 

\begin{figure}
	\centering
	\includegraphics[width=120mm,clip=true,trim=45mm 100mm 45mm 35mm]{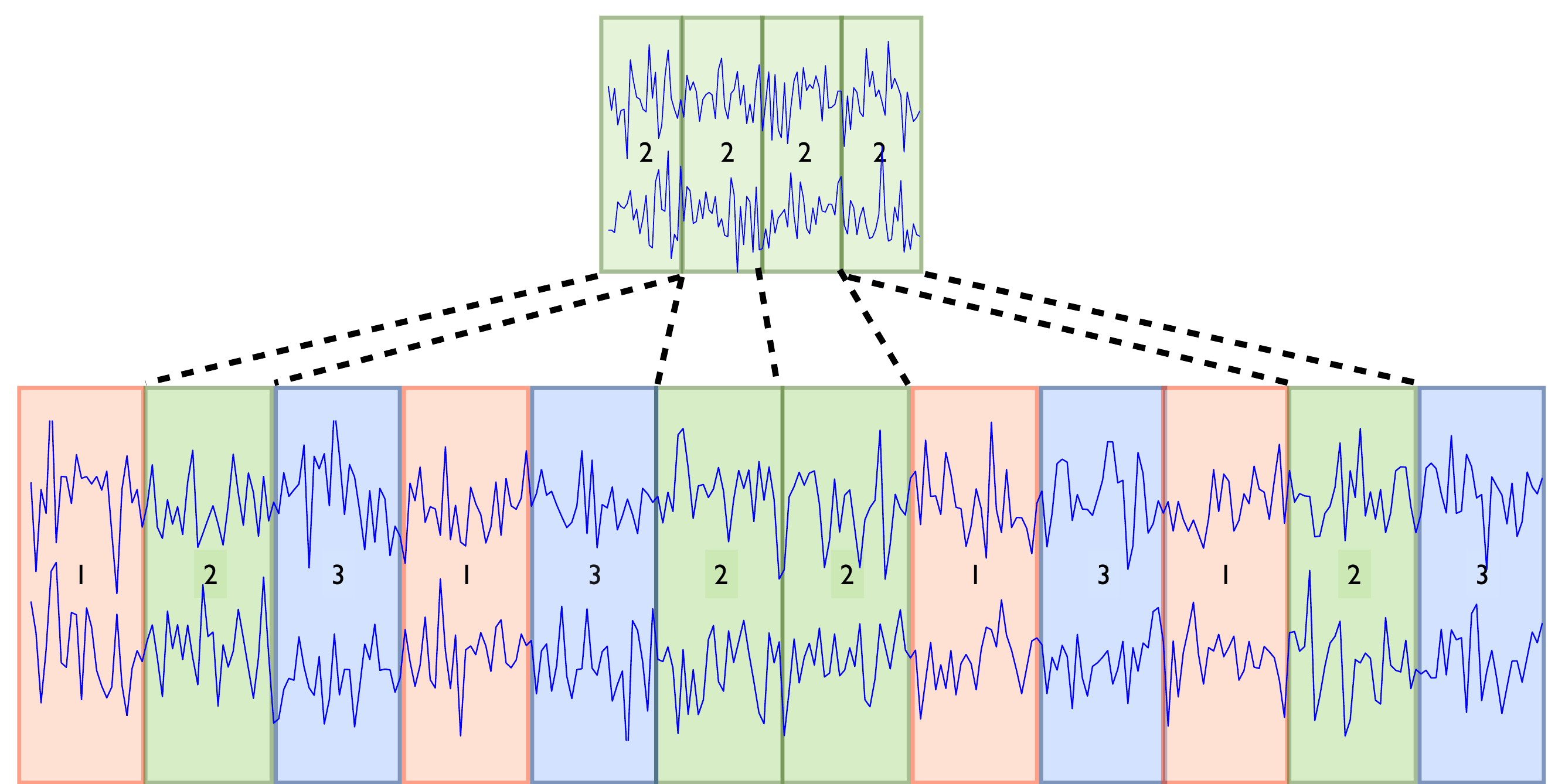}
	\caption{The figure displays the block trial structure of the multivariate data which we model. Trials recorded in separate behavioural conditions are depicted as colored blocks. A trial from any 
condition occurs with equal probability. Above is illustrated how the data from the 2nd condition is formed.
	\label{fig:trial_illustration}}
\end{figure}

However, our setting differs from the SSA model (given in Equation \ref{eq:SSA}) in that we assume that the observations 
are generated as the \textit{sum of two latent data generating systems}: a background system, and a condition-specific 
system---where the latter contains the information of interest and the former introduces irrelevant distribution
changes which we want to remove. In line with these assumptions, we model several time-series
corresponding to each experimental condition; this is at first glance inconsistent with all
data being drawn in practice from a single empirical time-series (corresponding to the samples recorded from the measurement device) but in fact,
this correctly captures the fact that each condition is subject to separate parameters.

In order to obtain these empirical time-series from the recorded data we assume that for each condition, 
several trials (for example corresponding to a single stimulus presentation or motor imagery etc.) are recorded; moreover, we assume that conditions occur with equal probability in the sequence of 
trials\footnote{Randomization of conditions is standard practice in many experimental designs.}. 
The trials may assumed to be approximately stationary (although this is not essential) and non-stationarity
exhibits itself between trials. Then the data from the $k-$th experimental condition may be obtained by concatenating
all of the trials recorded in that condition.

Figure~\ref{fig:trial_illustration} illustrates the setup of the data. For instance, in a typical EEG motor 
imagery brain computer interface (BCI) session two conditions are recorded: left imagery and right imagery. 
A trial for one of these conditions consists of the data recorded during a single imagined movement, for example 
an imagined left hand movement in the left condition\footnote{Note that in practice the trials in each condition may have different lengths. This will generate a bias in estimation if not appropriately  dealt with. One possible solution is to subsample segments of the longer condition conditions so that all trials are forced to be of the same duration. This implies that there will be sections of unused data where perhaps beforehand there were not. This is not an obstacle to the application of our method.}. 

These considerations allow us to plausibly model the observed data $\mathbf{x}_{k}(t)$ in condition $k$ at time $t$ (within condition time-index) as 
generated by the sum, 
\begin{eqnarray}
\label{eq:model}
	\mathbf{x}_{k}(t) &=& A\mathbf{s}(t) + B_k\mathbf{r}_k(t) \\
	 &=& \begin{bmatrix} A^\s & A^\n \end{bmatrix}  \begin{bmatrix} \mathbf{s^\s}(t) \\ \mathbf{s^\n}(t) \end{bmatrix} + \begin{bmatrix} B_k^\s & B_k^\n \end{bmatrix}  \begin{bmatrix} \mathbf{r_k^\s}(t) \\ \mathbf{r_k^\n}(t) \end{bmatrix}
\end{eqnarray}
where $A\mathbf{s}(t)$ is the contribution of the background system and $B_k\mathbf{r}_k(t)$ is the condition-specific
contribution \footnote{Similarly to the standard SSA model,
$A^\s$ and $A^\n$ are of size $d \times D$ and $(D-d) \times D$ and $B_k^\s$ and $B_k^\n$ are of size $d_k \times D$ and $(D-d_k) \times D$ corresponding to the numbers of stationary sources in each system.}.
The corresponding inverses are $A^{-1} = P = \begin{bmatrix} P^\s \\ P^\n \end{bmatrix}$ and $B_k^{-1} = Q_k = \begin{bmatrix} Q_k^\s \\ Q_k^\n \end{bmatrix}$. These matrices stay constant over time, whereas the sources, $\mathbf{s}(t)$ etc. change over time. Moreover, the matrix $A$ is constant over all $K$ conditions, whereas, $B_k$ is a variable which depends on the condition.
The superscripts $\s$ and $n$ refer to the stationary and non-stationary contributions of each system; as stated above, stationarity refers to the \emph{mean} and \emph{covariance} remaining constant over time (not the time-courses).

Note in particular at this point that $\mathbf{s}(t)$ (and $\mathbf{r}_k(t)$) are to be viewed as random time-series variables; this implies that 
while $\mathbf{s}(t)$ represents the \emph{same} random variable in two given conditions, 
the sample time-series drawn from these random variables in two sample experimental conditions will be \emph{different} entities. However, the time-index $t$ should be seen in each time series $\mathbf{x}_{k}(t)$
to represent the \emph{same} quantity in order that we may, in practice, consider approximate simultaneity (relative to the time-scale of the experiment) of samples drawn from separate experimental 
conditions.


The aim of our analysis is to remove the background non-stationary components $\mathbf{s^\n}(t)$ from 
the signals in order to be able to analyze the distribution changes in a particular condition-specific
system. To this end, we want to find a linear projection to the $d$ stationary components of the background 
system. Even though in the final analysis we will only look at the data from one of the conditions, we use the 
whole data set for estimating this projection as we will see in the next section.

\subsection{Parameter Estimation}
\label{sec:public_method}

The parameters of the model (given in Equation \ref{eq:model}) are the mixing matrix $A$ of the background system, and 
the mixing matrices $B_1,\ldots,B_K$ for each of the $K$ condition-specific systems. For the purpose of 
removing the background non-stationary components, we are only interested in finding a matrix 
$P^\s \in \mathbb{R}^{d \times D}$ which, when applied to the observed data, allows us 
to remove the non-stationary components of the background system, i.e.,
\begin{align*}
	P^\s \mathbf{x}_{k}(t) & = P^\s A^\s \mathbf{s^\s}(t) + \underbrace{ P^\s A^\n }_{=0} \mathbf{s^\n}(t) + P^\s B_k\mathbf{r}_k(t) \\
							& =	 P^\s A^\s \mathbf{s^\s}(t) +  P^\s B_k\mathbf{r}_k(t), 
\end{align*}
because the rows of $P^\s$ are in the dual space of the non-stationary subspace spanned by $A^\n$ such that 
$P^\s A^\n = 0$. Thus, the only matrix referred to by the model (\ref{eq:model}) that we estimate is 
a basis for the non-stationary subspace. 

The projection matrix $P^\s$, which recovers the stationary sources $\mathbf{s^\s}(t)$, is estimated by applying SSA \cite{PRL:SSA:2009:DISCRIM} in a particular way\footnote{Note that we will refer to $P^\s$ as the "projection to the background stationary sources"; this overloads a closely related use of "projection matrix", referring to a square matrix $A$ with $A^2 = A$. The
matrix obtained by filling out $P^\s$ to a square matrix with zeros in the final rows is a projection matrix in this second sense.}. In brief, SSA finds the 
projection to the $d$ stationary sources by minimizing the difference in distribution across epochs of 
the observed data. Special care is required here, however, since although we modeled the $K$ conditions separately in Equation~\ref{eq:model}, for the purpose of estimation of $P^\s$ we consider \emph{all} conditions as
resulting from the same empirical time-series.
That is, the raw time series obtained by concatenating all trials from all conditions, in the order they appear (see Figure~\ref{fig:epoch_illustration}), is divided into $N$ epochs (using a chronological segmentation or a 
sliding window) for which epoch covariance matrices $\esi_1, \ldots, \esi_N$ are estimated\footnote{The choice of epochs is a delicate issue. The same issues for our problem setting apply as for the SSA problem setting and details may be found in the SSA expository paper \cite{PRL:SSA:2009}. Whether the epoch length affects the stationarity of the data is an application-bound issue. In our BCI application below, we choose epochs so as to span several trials, thus reducing noise levels and including multiple instances of each condition but short enough so as to conform to the bounds derived in the mentioned paper \cite{PRL:SSA:2009}.}.
The optimal 
stationary projection $\hat{P}^\s$ is found by minimizing the difference between the projected epoch covariance
matrices, as measured by the Kullback-Leibler (KL) divergence $\KLD$ between Gaussians. The SSA objective function 
is the sum of the KL-divergences between each epoch and the average epoch, whose mean and covariance we 
can set to $0$ and $I$ respectively without loss of generality. This leads to the objective function,  
\begin{align*}
	\mathcal{L}\left(P \right) = \sum_{i=1}^N \KLD \left[ \mathcal{N}(0, P \esi_i P^\top) \, || \, \mathcal{N}(0,I) \right], 
\end{align*}
which is minimized using an iterative gradient-based procedure to find $P^\s$, 
\begin{align*}
	\hat{P}^\s = \argmin_{P P^\top = I} \sum_{i=1}^N -\log \det \left(  P \esi_i P^\top  \right) . 
\end{align*}

 \begin{figure}
	\centering
	\includegraphics[width=120mm]{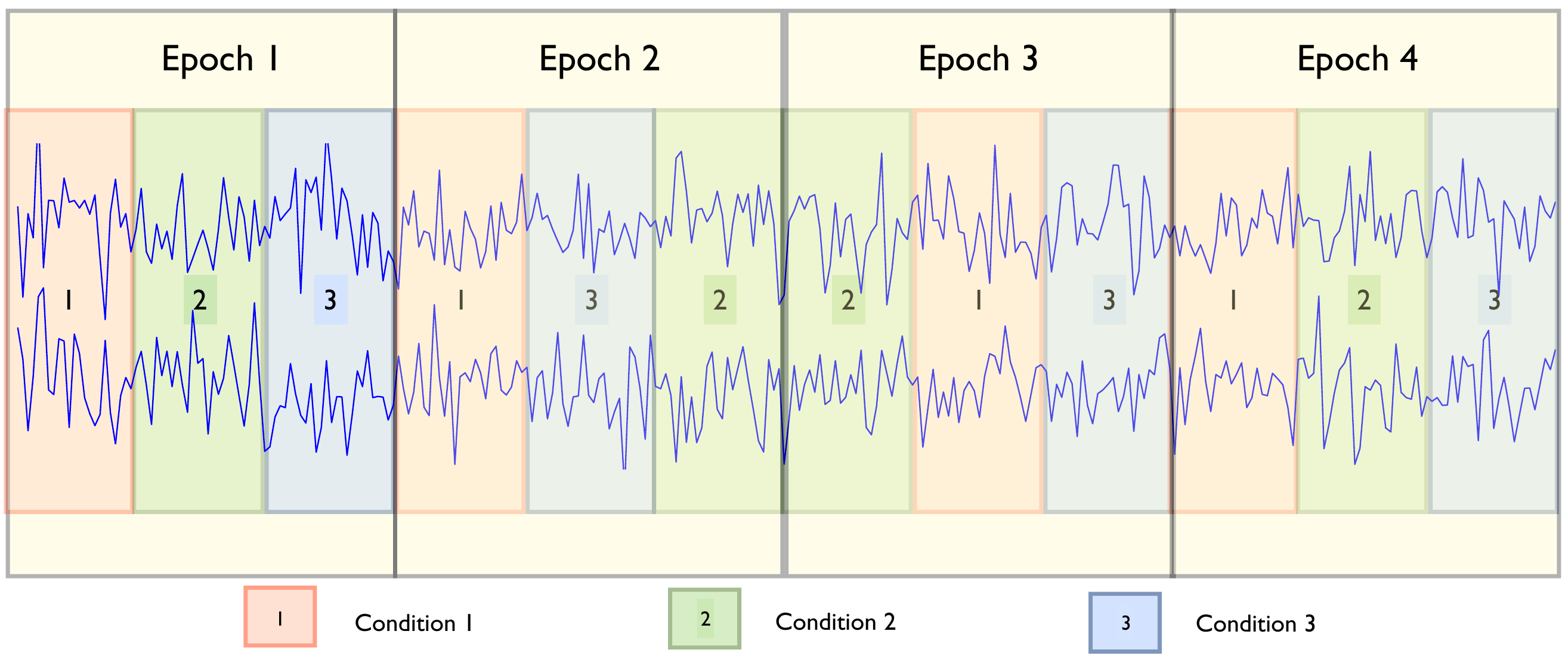}
		\caption{SSA epoch structure of the multivariate empirical time-series for estimating of the background stationary projection. 
	     Each epoch contains equal numbers of trials from each condition. 
	\label{fig:epoch_illustration}}
\end{figure}
 
Although SSA was originally developed to invert the model (\ref{eq:SSA}), it is also applicable to our problem setting. This is because the weaker the changes in the condition-specific non-stationary sources 
$\mathbf{r}_k^\n(t)$, the more closely the model (\ref{eq:model}) is approximated by the SSA model. 
To achieve this, we design our epochs in such a way that each epoch contains the same number of 
trials from each of the $K$ conditions; see Figure~\ref{fig:epoch_illustration} for an illustration.
Thus we obtain SSA epoch covariance matrices which are approximately equal to the covariance matrix of the background system up to an approximately \textit{stationary} (i.e.\ constant over SSA epochs) contribution 
from the condition-specific systems. More formally, the true covariance matrix (covariance at lag $0$ with size $D \times D$) of the 
$i$-th SSA epoch $\Sigma_i$ is given by, 
\begin{align*}
	\Sigma_i = C_i + \bar{R}_i
\end{align*}
where $C_i$ is the covariance matrix of the background system, and $\bar{R}_i$ is the average of
the covariance matrices over all conditions in the $i$-th SSA epoch, 
\begin{align}
	\bar{R}_i = \frac{1}{K} \sum_{k=1}^K R_{i,k},  \label{eq:av_epoch_cov_classes}
\end{align}
with $R_{i,k}$ being the covariance matrix of the $k$-th condition when considering only the condition specific system in the $i$-th SSA epoch. 
As we will show in the next section, the condition-specific part $\bar{R}_i$ converges 
to stationarity in the sense that it becomes constant over SSA epochs. This means that SSA
will correctly identify the non-stationary directions in the background system which we 
want to remove.

\subsection{Theoretical Results}
\label{sec:con_proof}

In this section we describe a theoretical guarantee for the quality of the approach outlined in the previous 
section. The rather technical proof is deferred to Appendix~\ref{app:proofs}; here we focus on the intuitive 
interpretation of the result. 
In particular, Proposition~\ref{prop:converge}, which we prove in the Appendix, states that the covariances resulting from forming SSA epochs containing several conditions will converge to stationarity in the limit of the numbers of conditions $K$ and condition-specific stationary dimensions $d_k^\s$.
This allows the SSA algorithm to identify the desired part of the background system, because the only remaining factor affecting estimation are then the small sample effects affecting SSA.

\begin{Prop}
Let $\bar{R_{i}}$ be, as defined in Equation~\ref{eq:av_epoch_cov_classes},
the average condition specific covariance in the $i$-th epoch, $K$ the number of conditions, $N$ the number of epochs, $D$ the number of dimensions, where the number of non-stationary dimensions, $d_n$, is fixed. 

 Assume, that the non-stationarity is bounded, i.e.~for all $i,k$, corresponding to the $i$-th epoch and $k$-th condition, $\vectornorm{R_{i,k} - \frac{1}{N}\sum_{i=1}^NR_{i,k}}_\mathcal{F} < \Gamma$, for some constant $\Gamma$, where $||\cdot||_\mathcal{F}$ is the Frobenius norm on matrices. Assume finally that the columns of the matrices $B_k^\n$, the condition-specific non-stationary subspaces,  are also uniformly distributed with orthonormal columns\footnote{They are a sample of Haar measure.}. 
 
 Then let $P$ be any projection matrix. Then, for any $\delta > 0$ we have the following relation in probability, in $K$ and $D$, of $\bar{R}_i$ to the average covariance over epochs $\frac{1}{N}\sum_{j=1}^N \bar{R}_j$ under the projection $P$, 
 
\begin{multline}
	\text{Pr}\left(\frac{1}{N}\sum_{i=1}^N\vectornorm{P^\top\left( \bar{R_{i}} - \frac{1}{N}\sum_{j=1}^N  \bar{R}_{j}\right)P}_\mathcal{F} \right. \\
\left.	> \mathcal{O}\left(\frac{1}{K}\right) + \mathcal{O}\left(\frac{1}{\sqrt{D}}\right) + \mathcal{O}\left(\frac{1}{\sqrt{\delta \sqrt{D}}}\right)\right) 
		<\delta.
\end{multline}
\label{prop:converge}
\end{Prop}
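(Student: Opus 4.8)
The plan is to isolate the effect of the Haar-random condition-specific mixing matrices. Write $\Delta_{i,k} := R_{i,k} - \frac1N\sum_{j=1}^N R_{j,k}$, so that by linearity $\bar R_i - \frac1N\sum_{j}\bar R_j = \frac1K\sum_{k=1}^K \Delta_{i,k}$ and $\vectornorm{\Delta_{i,k}}_\mathcal{F} < \Gamma$ by the bounded-non-stationarity hypothesis. Each $R_{i,k}$ is $B_k$ times the (within-condition) source covariance of epoch $i$ times $B_k^\top$; since the stationary--stationary block of the condition-specific source covariance is constant across epochs, the deviation has the form $\Delta_{i,k} = B_k M_{i,k} B_k^\top$ for a fixed symmetric $M_{i,k}$ whose stationary--stationary block vanishes and with $\vectornorm{M_{i,k}}_\mathcal{F} = \vectornorm{\Delta_{i,k}}_\mathcal{F} < \Gamma$, where $B_k$ is Haar-distributed on the orthogonal group (equivalently, its non-stationary block $B_k^\n$ has Haar-distributed orthonormal columns) and the $B_k$ are independent across conditions.

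The first key step is the Haar averaging identity $\mathbb{E}_{B_k}\!\left[B_k M B_k^\top\right] = \frac{\mathrm{tr}(M)}{D}\, I_D$ for any fixed symmetric $M$ (the left side commutes with $O(D)$, hence is a multiple of the identity by Schur's lemma, and its trace is $\mathrm{tr}(M)$). Hence $\mathbb{E}[\Delta_{i,k}] = \frac{\mathrm{tr}(\Delta_{i,k})}{D}I_D$, so $\mathbb{E}\bigl[\frac1K\sum_k\Delta_{i,k}\bigr] = \alpha_i I_D$ with $\alpha_i = \frac{1}{KD}\sum_k \mathrm{tr}(\Delta_{i,k})$ and $|\alpha_i| \lesssim \Gamma/D$ (using $|\mathrm{tr}(\Delta_{i,k})| \lesssim \vectornorm{\Delta_{i,k}}_\mathcal{F}$, which holds for effectively low-rank changes; one may also add this directly as a hypothesis). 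Consequently, for any projection $P$, $\frac1N\sum_i\vectornorm{P^\top(\alpha_i I_D)P}_\mathcal{F} = \mathcal{O}(1/\sqrt D)$, which accounts for the $\mathcal{O}(1/\sqrt D)$ term.

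It remains to control the centred average $Z_i := \frac1K\sum_k\bigl(\Delta_{i,k} - \mathbb{E}[\Delta_{i,k}]\bigr)$: the summands are independent, mean zero and of Frobenius norm $\mathcal{O}(\Gamma)$, and both projecting by $P$ and conjugating by the orthonormal columns of $B_k$ are Frobenius-norm contractions, so a second-moment estimate for this average of $K$ independent terms — together with a Weingarten-type evaluation of $\mathbb{E}\vectornorm{P^\top B_k M_{i,k} B_k^\top P}_\mathcal{F}^2$, which governs the per-condition variance — bounds $\mathbb{E}\,\frac1N\sum_i\vectornorm{P^\top(\bar R_i - \frac1N\sum_j\bar R_j)P}_\mathcal{F}$ by the sum of the bias term above and an error that decays in $K$ (the source of the $\mathcal{O}(1/K)$ term); independence across the $K$ conditions, which is exactly what the Haar assumption on the distinct $B_k$ provides, is essential here. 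Finally, viewing the left-hand-side average as a function of $B_1,\dots,B_K$, one combines this bound on the mean with a variance bound — obtained again from independence across conditions and from the concentration of Haar measure, whose fluctuations scale as a negative power of $D$ — and applies Chebyshev's inequality to convert it into the stated tail bound, with the $\mathcal{O}\!\bigl(1/\sqrt{\delta\sqrt D}\bigr)$ term appearing as $\sqrt{\mathrm{Var}/\delta}$.

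The main obstacle is the moment bookkeeping around the Haar measure: one must evaluate, or sharply bound, the first and (especially) second moments of $P^\top B_k M_{i,k} B_k^\top P$ under $B_k\sim\mathrm{Haar}$ — in effect fourth-order integrals over the orthogonal group — and verify that the resulting polynomial rates in $D$, and in $K$ after averaging over the independent conditions, match those claimed in the statement; in particular one must check that the cross-covariance blocks of $M_{i,k}$, which need not vanish (only the stationary--stationary block does), do not spoil these rates. The remaining ingredients — linearity, the vanishing of the stationary--stationary block of $M_{i,k}$, the contraction properties of $P$ and of conjugation by an orthonormal frame, and the final Chebyshev step — are routine.
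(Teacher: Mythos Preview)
Your approach is genuinely different from the paper's. The paper does \emph{not} work with an arbitrary fixed $P$: it specialises to the condition-specific projections $P=Q_{k'}^\n$ and argues that controlling these simultaneously suffices. For such a $P$ it splits $\frac{1}{K}\sum_k Q_{k'}^\n \Delta_{i,k}(Q_{k'}^\n)^\top$ into the single ``diagonal'' term $k=k'$, which is bounded deterministically by $\Gamma/K$ (this is the source of the $\mathcal{O}(1/K)$), and the remaining $k\neq k'$ terms, each bounded by a constant times $\|S^\top I_{d_n} S\|_\mathcal{F}$ for a Haar-random projection $S$; the paper then quotes the density of the inner product between random one-dimensional subspaces to read off the mean ($\sim 1/\sqrt{D}$) and standard deviation ($\sim 1/D^{1/4}$) of this quantity, applies Chebyshev, and finishes with Boole's inequality over the $K$ choices of $k'$ (absorbing a factor $K$ into the probability bound). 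By contrast you keep $P$ generic throughout and use the Haar averaging identity (Schur's lemma) for the mean together with Weingarten-type second-moment bounds for the fluctuations before Chebyshev. Your route is more systematic and avoids the somewhat heuristic dot-product-of-random-subspaces step in the paper; the paper's route is more elementary in that it never needs fourth-order Haar integrals.

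One point to watch: your identification of the $\mathcal{O}(1/K)$ term is not quite the paper's. In the paper it arises \emph{deterministically} from the single $k=k'$ summand that fails to benefit from random-subspace suppression. In your framework, where $P$ is independent of all the $B_k$, there is no such distinguished term; the $K$-dependence enters only through averaging $K$ independent mean-zero matrices, which gives a $1/\sqrt{K}$ scale in the standard deviation (and hence in the Jensen bound on the mean), not $1/K$. This does not break your argument --- it would produce a bound with a slightly different (and in some respects tighter) dependence on $K$ and $D$ --- but you should be explicit that the rates you obtain need not match the stated form term-by-term.
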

\begin{proof}
See Appendix \ref{app:proofs}.
\end{proof}

The term on the left of the inequality expresses the non-stationarity
measured in the terms $\bar{R}_i$ over epochs; each term of the sum
expresses the deviation of $\bar{R}_i$ from the average over SSA
epochs.  Note that this is a different measure from the measure of
non-stationarity that we use for \emph{optimization} with SSA: the SSA
objective function is invariant to changes in the parametrization of
the data space, which is desirable for estimation, whereas the measure
in the proposition is not.  However, for the purpose of deriving a
convergence rate, the measure in the proposition is more convient. Nevertheless both measures are equivalent in the sense that each
is zero if and only if the other is zero.

The term on the right of the inequality decreases in $K$ and $D$: for
complete convergence to be guaranteed under the assumptions of the
proposition, note that both $K$ (number of epochs) and $D$ (number of
dimensions) should be assumed to grow. Nevertheless, either $D$ or $K$
increasing guarantee a better estimate. That $D$ may assumed to be
much bigger than $d_n$ in practice is plausible, as often
non-stationarities are confined to low dimensional subspaces, for
example, artifacts in EEG are often confined only to electrodes at the
edge of the scalp.

Note also that the proposition is conditioned on two important assumptions: firstly that the non-stationary subspaces of the condition specific systems are sampled uniformly at random and secondly that 
the non-stationarity is uniformly bounded over classes ($\vectornorm{R_{i,k} - \frac{1}{N}\sum_{i=1}^NR_{i,k}}_\mathcal{F} < \Gamma$).
These are plausible assumptions to make on the model. The first assumption is equivalent to assuming we have no prior knowledge of the condition specific systems's mixing matrices, but may be weakened without making convergence impossible and the second assumption is necessary for estimation to occur at all, but is nevertheless plausible; for example, in the BCI application (below), the condition specific non-stationarities are much weaker than 
the background non-stationarities.




\section{Simulations on Synthetic Data}
\label{sec:simulations}

In this section, we evaluate the proposed method in controlled simulations on synthetic 
data, where we can objectively assess the performance based on an artificial ground truth.
Our analysis consists of two parts. First of all, we investigate how accurately our algorithm 
can identify the true background stationary components. In our model, these are the 
relevant components for subsequent analysis. In the second part of this section, we show 
that these components are useful. To that end, we apply our method as a pre-processing step to 
change-point detection in a situation where there are irrelevant background non-stationary 
components, which we want to remove. We analyse the relative merits of our pre-processing 
in different scenarios.

\subsection{Data Generation}
\label{sec:data}

The synthetic data that we use in the controlled simulations is generated according to our 
model (Equation~\ref{eq:model}). At each time point, the observed $D$-variate sample $x_k(t)$ 
in the $k$-th condition is a sum of contributions from the background and the 
condition-specific system. The background variables are linearly transformed by a 
random orthogonal mixing matrix $A \in \R^{D \times D}$ that is kept fixed over all conditions. 
The condition-specific components are linearly transformed by an orthogonal mixing matrix 
$B_k \in \R^{D \times D}$ that is chosen randomly for each condition $1 \leq k \leq K$, where 
$K$ is the total number of conditions. In the simulations, the number of stationary and 
non-stationary components is kept fixed to the same value for both systems.

The background and the condition-specific system each comprises
two groups of components: stationary and non-stationary. The mean and
covariance matrix (at time-lag $0$) of the stationary components are the zero vector $0$
and the identity $I$ respectively. The mean of the non-stationary
components is also fixed to zero, whereas the covariance matrix (again at time-lag zero) 
in each epoch is chosen according to two Markov models, one for the background
and one for the condition-specific system. Both Markov models consist
of five states corresponding to five covariance matrices. The
covariance matrices are diagonal with entries drawn randomly
from a set of log-spaced values over the interval $[1/\nu, \nu]$ with $\nu >
1$; thus a high $\nu$ corresponds to a high level of non-stationarity. For the background system, this parameter is
$\nu_\text{shared}$ and for the condition-specific it is
$\nu_\text{specific}$.  The probability of changing the state in the Markov model
(i.e.\ switching to another covariance matrix) is parametrized by
$p_\text{shared}$ and $p_\text{specific}$ (denoting precisely the probability of remaining in the current state; the probability if changing to each other state is the same) for the background and
the condition-specific system respectively. Thus a low $p$ corresponds to few changes occurring. The shortest segment of stationary data possible consists of $50$
samples, generated from a Gaussian with the respective covariance
matrix; thus changes may occur at most once every 50 data points.

To sum up, the key parameters of interest are the strength of the non-stationarity in both
systems ($\nu_\text{shared}$ and $\nu_\text{specific}$), and the probability of changes in the
covariance matrix of the non-stationary sources ($p_\text{shared}$ and $p_\text{specific}$). 

\subsection{Finding the background Stationary Sources}
\label{sec:get_pub}

In the first set of experiments, we investigate how well our method can identify the true stationary
components of the background system. We analyze the influence of the number of different conditions 
in two scenarios: low and high condition-specific degrees of non-stationarity. 
The total number of dimensions is set to 20 and the number of non-stationary directions is set to 
1 and 2 for the condition-specific and background system respectively; the total number of sampless
is 10000. The error is measured as the subspace angle between the true non-stationary subspace
of the background system and the estimated one. 

\begin{figure}
	\centering
	\includegraphics[width=120mm]{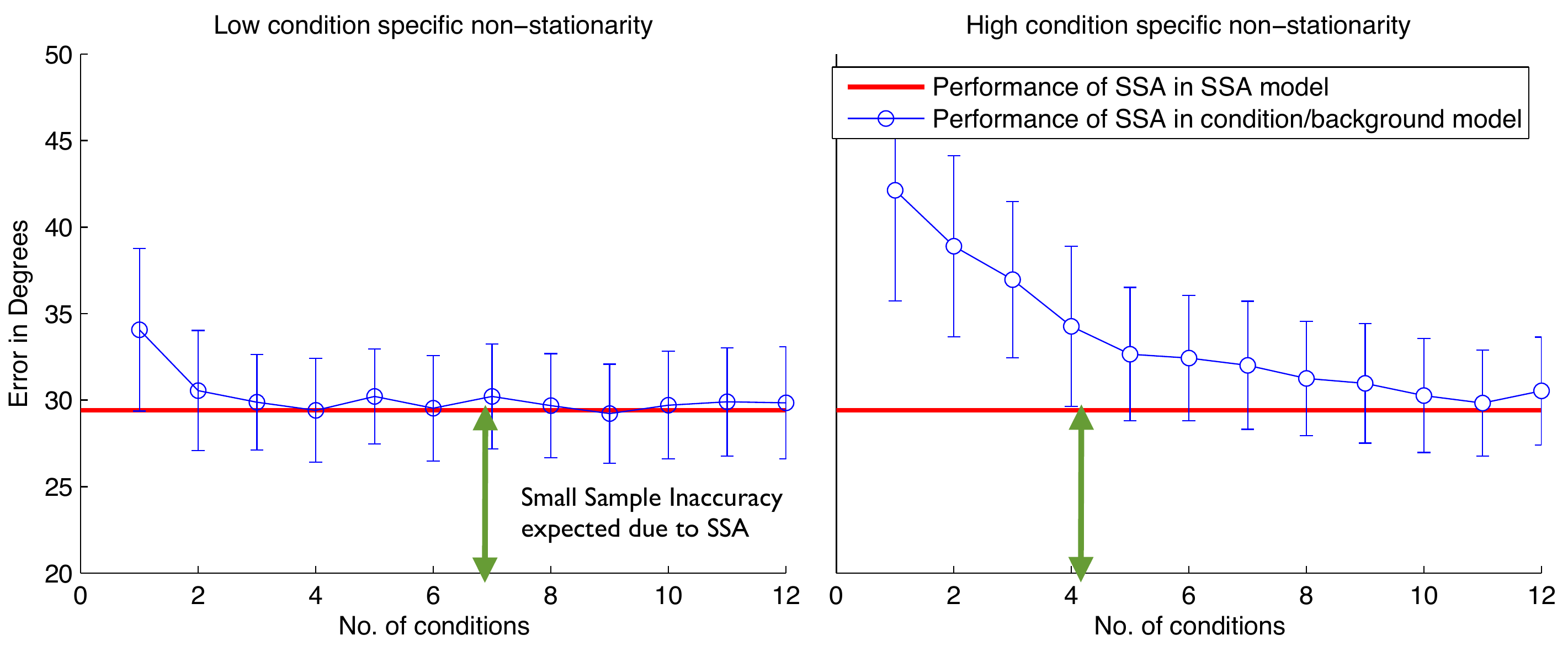}
	\caption{
	Performance of the proposed method measured in terms the subspace angle (vertical axis) between the true
	and the estimated non-stationary subspace of the background system, for different numbers 
	of conditions (horizontal axis). The left and right panels shows the results for the cases where the
	condition-specific non-stationarity is low and high respectively.  In each case, the red line denotes the case in which the condition-specific systems are replaced with stationary noise $d_k^\s = D$ as a baseline result. The red line case thus represents the limit in potential accuracy of the method relative to accuracy in the standard SSA model.
\label{fig:acc_sim}
}
\end{figure}

The results are shown in Figure~\ref{fig:acc_sim}. In the left panel, we see the results for low 
condition-specific non-stationarity. As we increase the number of conditions, the error in finding the 
true stationary components (blue line) becomes smaller. This is because the non-stationary contributions of the 
condition-specific systems become more equally spread across the whole $D$-dimensional space since for 
every condition $k$ we choose a random mixing matrix $B_k$. This allows us to distinguish more clearly 
between the background and the condition-specific non-stationary directions and, in turn, find 
the background stationary components. The red line shows the error on the same data where we 
have removed the condition-specific contributions: this is the lowest possible error level on the condition specific model for finite samples. We see that as soon as we have at least four different conditions, 
we can identify the true background stationary components as well as if there were no 
condition-specific non-stationarities.

The right panel shows the results in the case where the degree of non-stationarity in the condition-specific 
system is high relative to the background system. As we can see, this makes the desired stationary components 
of the background system more difficult to identify than in the previous case, because the 
overall distribution is dominated by changes in the condition-specific system. We therefore need a larger number 
of conditions (around 11) in order to distinguish between the non-stationarity of the two systems.

\subsection{Identification of the background Non-Stationary Components in Simulated EEG}

We illustrate the application of SSA to artificial data generated by a
realistic EEG forward model.
The head model we use was calculated on the basis of the MRI scans of
152 human participants \cite{Kopfmodel1}, the electrodes were placed
on the head according to the international 10-5 system and the
forwards mapping from simulated dipoles to voltages at the electrodes
were computed using the semi-analytic methods of the contribution of
Nolte et al. \cite{LeadFields}.
Given this framework we simulated two background non-stationary
sources, the first 2 cm underneath the electrode T7 close to the edge of the montage, to simulate the effects of a non-stationary 
EMG component and the second at the primary 
visual cortex, 2cm under the electrode Oz, to simulate the effect of a non-stationary alpha source.

Subsequently we modeled three scenarios.
In the first scenario, for each of 13 conditions, in addition to the two background non-stationary sources, a non-stationary source was positioned
at random in the brain together with 12 stationary sources, thus the dataset has dimensionality 15, within a 118 (the number of electrodes) dimensional space. The non-stationarities were generated using a Markov model, giving rise to segments of simulated EEG data 
with varying power.
The data were preprocessed using Principal Components Analysis (PCA) \cite{Pearson1901Lines} to eliminate the degenerate directions (PCA then yields a projection $U \in \mathbb{R}^{15 \times118}$ and a mixing matrix $V\in \mathbb{R}^{118 \times15}$). We then computed the estimated background non-stationary components using SSA 
and the estimated condition specific components, for the first 2 conditions. More precisely, $A^\n$ and $P^\s$, were computed on the PCA-preprocessed data using SSA; thus the scalp pattern corresponding to the most non-stationary background sources is given by composing $A^\n$ (of size $15 \times d^\s$) with the mixing matrix yielded by PCA $V$ (of size $118 \times 15$) i.e. we compute ($V \times A^\n$). Subsequently SSA was applied to the condition-specific data preprocessed using PCA and projected by $P^\s$ (i.e. on $P^\s U \mathbf{x_k}(t)$) in order to obtain a mixing matrix (which we will call $C^\n_k$) corresponding to the non-stationary condition-specific patterns; thus $B^\n_k = A^\s \times C^\n_k$.  The final corresponding scalp pattern is given by $V \times A^\s \times C^\n_k$.
The results of this first scenario are displayed in Figure~\ref{fig:ForwardModel} and show that SSA successfully locates the EMG and alpha component, and succeeds in obtaining asymmetric non-stationary components between the first two
classes, which match the true non-stationary condition specific patterns. Moreover, we see that using SSA separately on each class does not deliver the background non-stationarities and the condition specific non-stationarities successfully, whereas estimation of the background non-stationarities over conditions using SSA is successful as described in Section~\ref{sec:public_method}, following which the condition specific non-stationary patterns are successfully
found. 

Although, in general the background non-stationarities may only be reliably obtained given many conditions, we may still consider whether using SSA to remove background non-stationarities works
when we have only two conditions and under which circumstances. This reflects the situation in our application to EEG data, where we consider motor-imagery of two separate movements, described in Section~\ref{sec:application}. Thus, in the second scenario, we reduce the number of conditions to two and simulate background non-stationary sources which are \emph{stronger}
in their non-stationarity than the condition specific non-stationary sources. All other settings are kept constant. The results of this second scenario are displayed in Figure~\ref{fig:ForwardModelTwoClasses}
and show that in this case, when the background non-stationarities are strong, then these, and subsequently the condition specific non-stationarities, may be obtained nonetheless, using the proposed method.

However, in the third and final scenario we see that when the number of conditions is two and the background non-stationarities are weak, then obtaining patterns using SSA on each condition separately which
nevertheless agree across the conditions, and thus may be assumed to correspond to background non-stationarities, occurs only after the most non-stationary components have been removed. Note that we remove more non-stationary directions here than actually are generated; this is achieved simply by setting the $d$ parameter of the SSA computations to a different value than the true value. The results of this third
scenario are displayed in Figure~\ref{fig:ForwardModelSpike}.
We will
see, in the analysis of the BCI data in Section~\ref{sec:application}, that both of these second and third scenarios may arise.

\begin{figure}
\begin{centering}
\includegraphics[width=114mm,clip=true,trim=1mm 80mm 105mm 70mm]{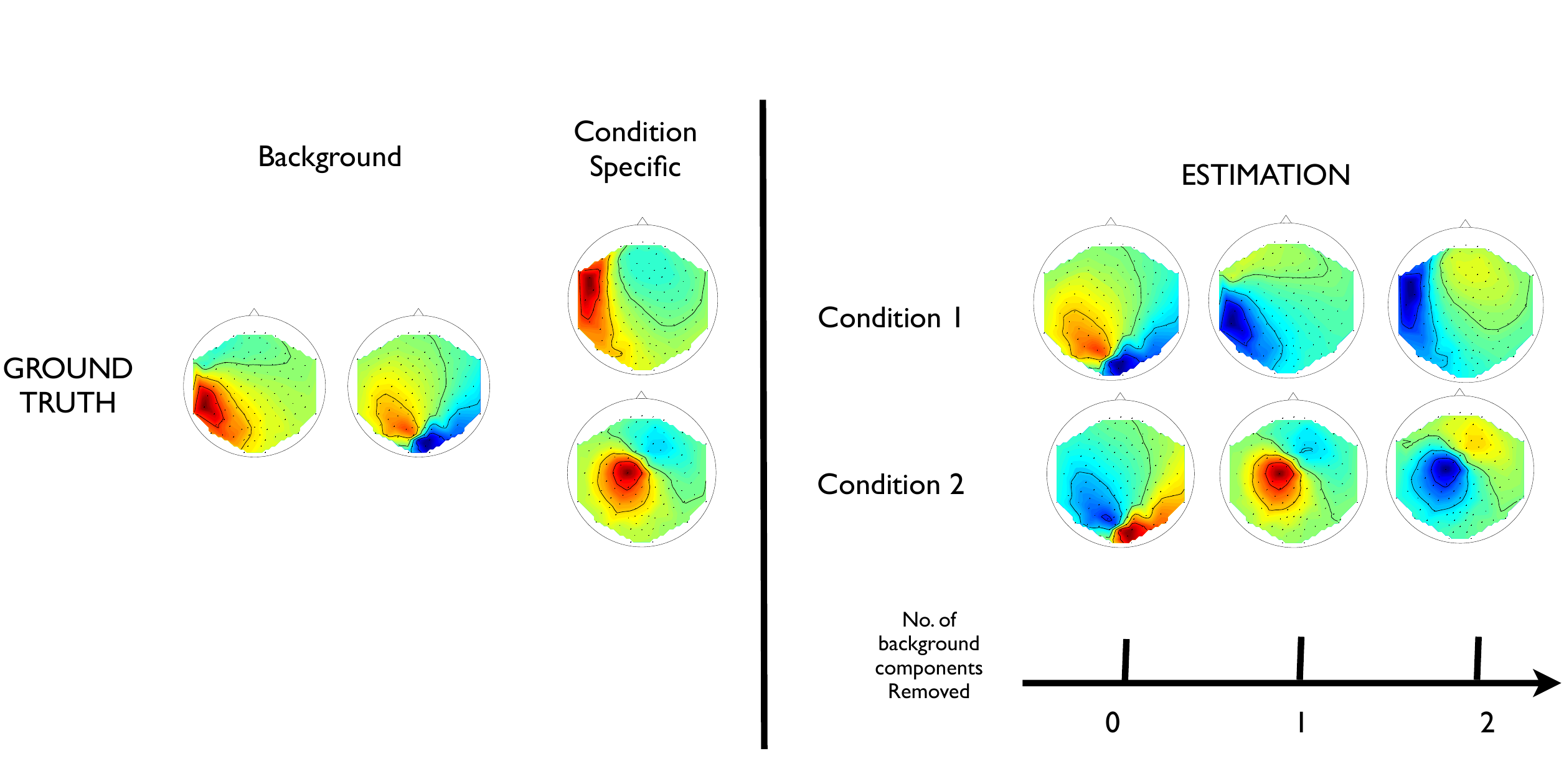}
\caption{The figure displays scalp plots computed using SSA on the EEG forward model data. The left panel displays the true non-stationary patterns, and the right hand panel displays
the estimated condition specific non-stationary patterns after removal of zero, one and two background non-stationary directions. The plots show that SSA successfully obtains the patterns resulting
from the background non-stationary alpha and background non-stationary EMG, and then allows one to concentrate on the condition specific non-stationarities in each class. Note that
after removing one background non-stationary direction, SSA finds a difference between the conditions. This is because, for the two conditions studied, the condition specific non-stationarities
are strong. Nevertheless, SSA successfully obtains the correct background non-stationarity, so that the final condition specific non-stationarities are correct.
 \label{fig:ForwardModel}}
\end{centering}
\end{figure}

\begin{figure}[h]
\begin{center}$
\begin{array}{cc}
\includegraphics[width=152mm,clip=true,trim=1mm 87mm 10mm 80mm]{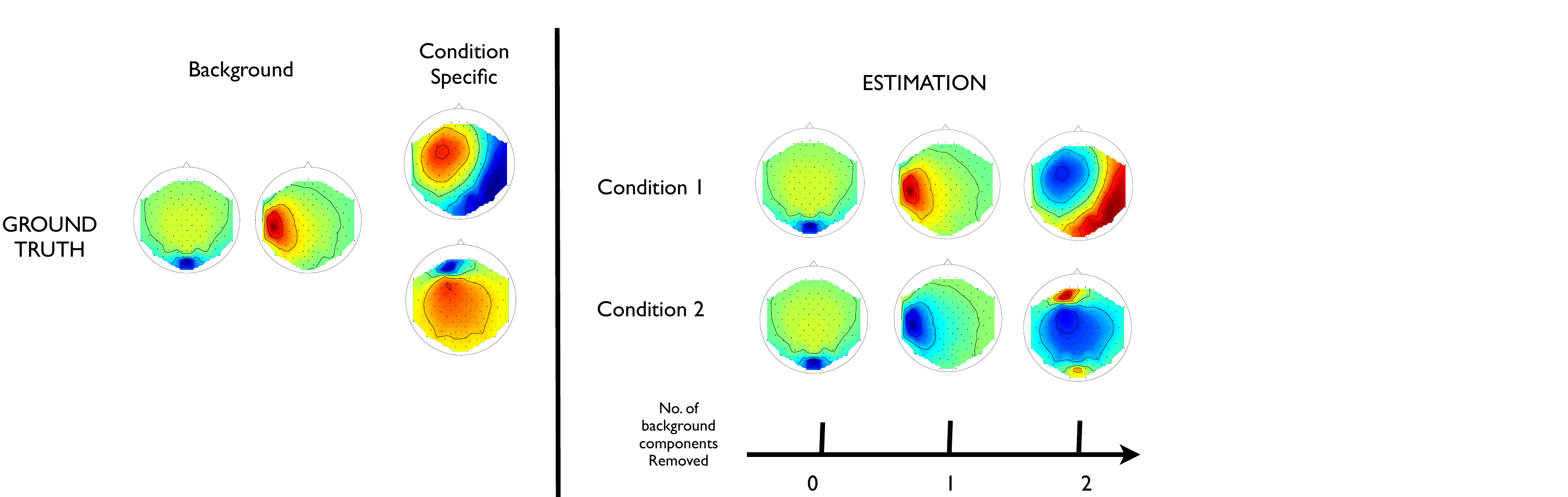}\\
\end{array}$
\end{center}
\caption{The figure displays the results of the forward model simulation when the number of conditions is two and the background non-stationarities are 
stronger than the condition specific non-stationarities. Here, using SSA, one obtains the background stationarities as estimated "condition-specific" non-stationarities
which agree across conditions after zero and one background non-stationarities have been removed. The condition specific non-stationarities are then 
obtained after the removal of two background non-stationary directions.
\label{fig:ForwardModelTwoClasses}}
\end{figure}

\begin{figure}[h]
\begin{center}$
\begin{array}{cc}
\includegraphics[width=110mm,clip=true,trim=3mm 85mm 20mm 80mm]{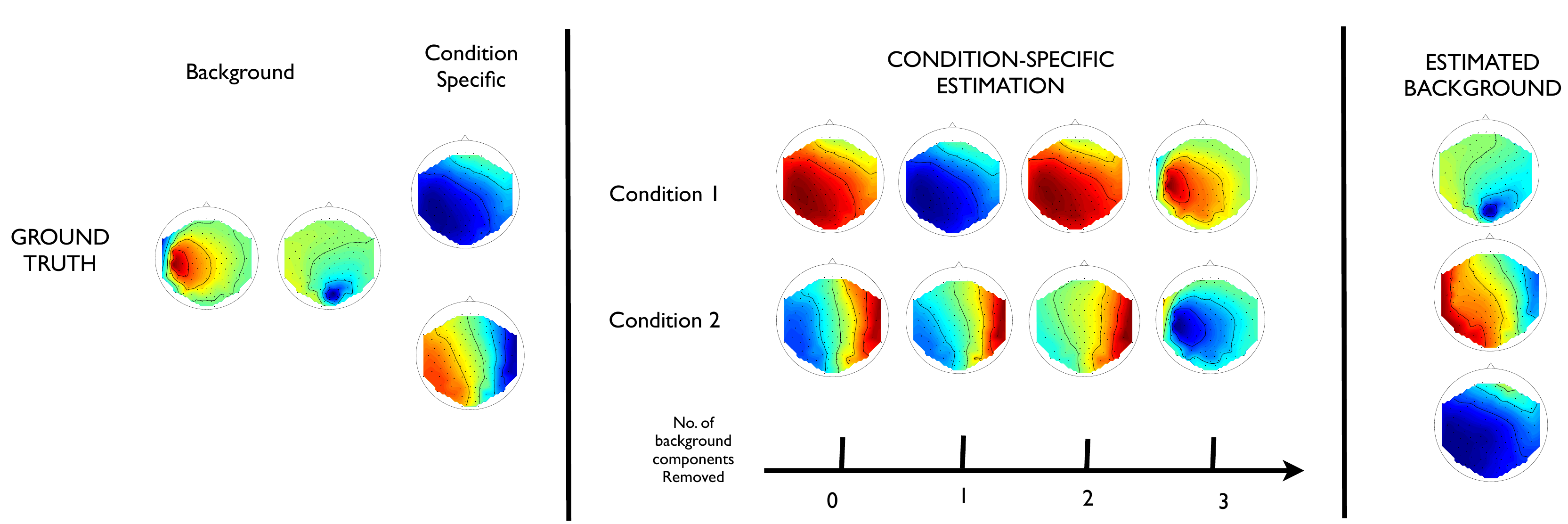}\\
\end{array}$
\end{center}
\caption{The figure displays the results of the forward model simulation when the number of conditions is two and the condition specific non-stationarities are 
stronger than the background non-stationarities. Here, after removal of zero, one and two background non-stationary directions, SSA finds \emph{different}
non-stationary patterns on each condition. However, after removal of three background non-stationary directions, SSA finds non-stationary patterns on each condition
which agree, and arise, in fact, from the true background non-stationarities. The reason that the condition specific non-stationarities are similar after removal of
zero and one background non-stationary directions, is because the first estimated \emph{background} non-stationary direction is nevertheless correct, 
the EMG component, so the condition specific most non-stationary directions are preserved. These are then removed as the estimated second and third most 
non-stationary background directions.
\label{fig:ForwardModelSpike}}
\end{figure}

\subsection{Application to Change-Point Detection}
\label{sec:cp_d}

In this section, we demonstrate that the proposed method can be used to find \textit{relevant} change-points 
in the data from a particular condition of interest, even when there are \textit{irrelevant} 
distribution changes that are shared by all conditions. In reality, this situation occurs e.g.\ when 
there are artifacts in the data that are independent of the experimental condition. 

Change point detection refers to the task of detecting points in time in which the distribution 
of a time series changes from one state to another; this task occurs potentially in sleep staging of 
EEG recordings \cite{switch_dynamics} and in early warning systems for epilepsy patients using chronically 
implanted electrodes \cite{Celka20021}, among other applications \cite{Speech1,Econ1,Econ2}. There
exists a wide range of different methods for change-point detection (also called time series segmentation). 
In our simulations, we use single linkage clustering based on a symmetrized Kullback-Leibler divergence 
measure~\cite{Gower69SingleLinkage,blybunmeimul12feature_DISCRIM} between segments of the time series.

\begin{figure}
\begin{centering}
\includegraphics[width= 130mm,clip=true,trim=30mm 160mm 10mm 15mm]{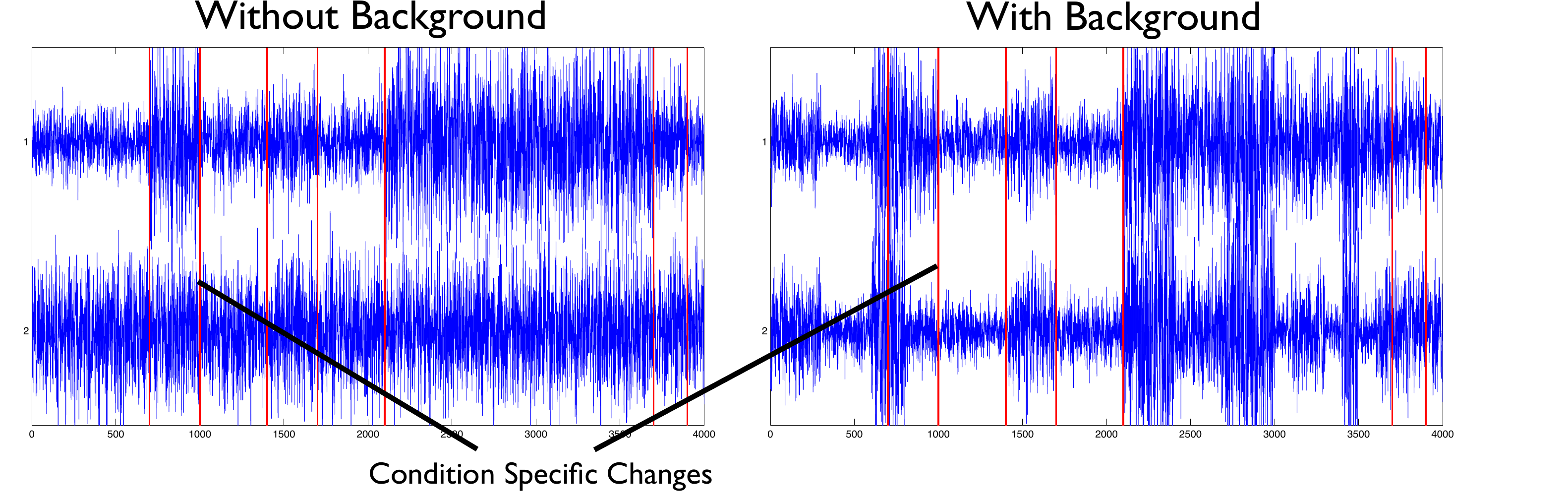}
\caption{The change point detection task is made more difficult when the change points of interest (red lines) in the condition specific system (left) are masked by background change points (right). 
The left panel displays only the condition specific system $B_k\mathbf{r_k}(t)$ whereas the right hand panel displays the entire data corresponding to $A \mathbf{s}(t) + B_k\mathbf{r_k}(t)$.
\label{fig:TempSegIllusBack}}
\end{centering}
\end{figure}

\begin{figure}[ht]
\begin{center}
\includegraphics[width = 120mm,clip=true,trim=2mm 45mm 40mm 30mm]{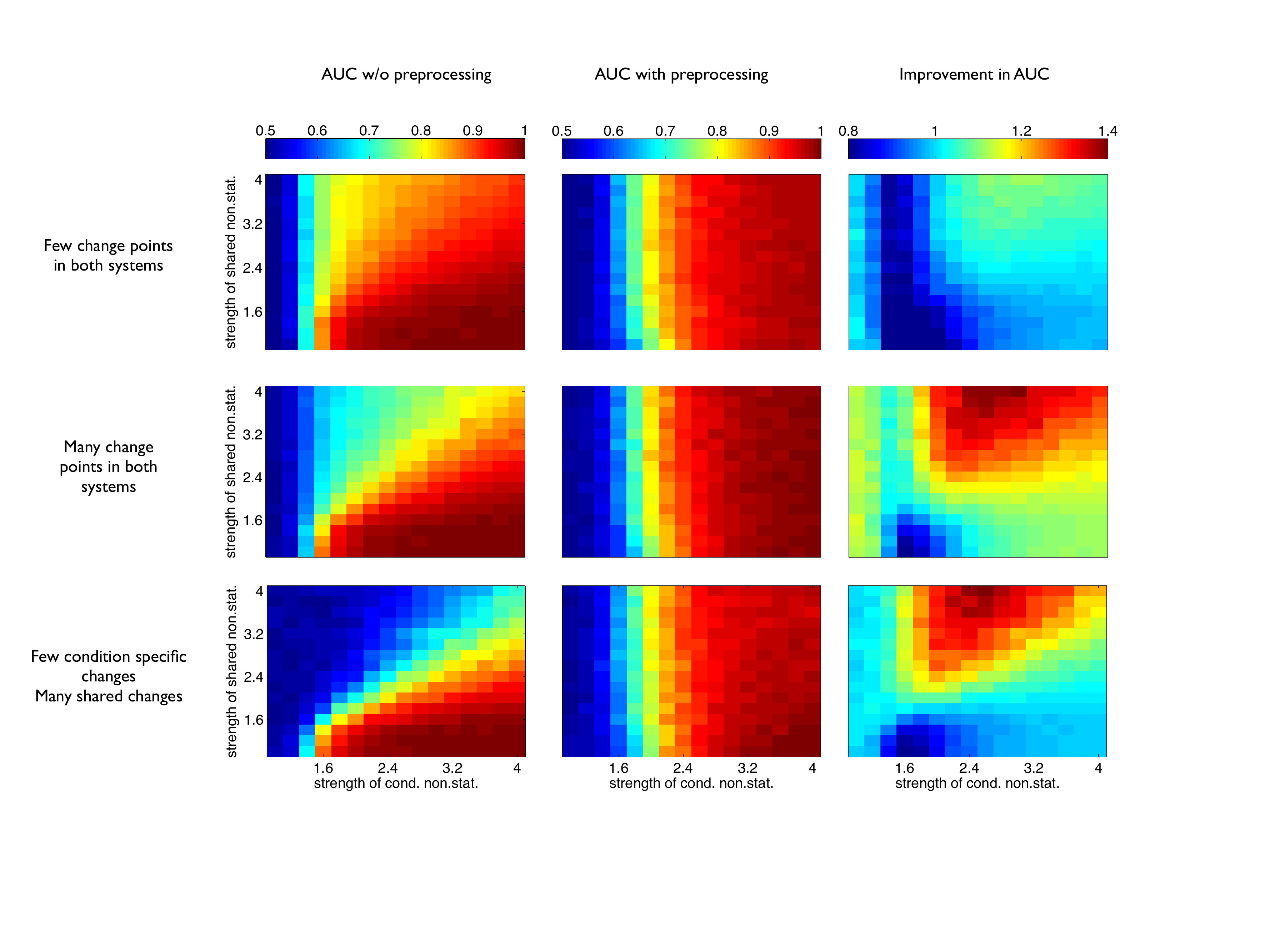}
\caption{
Results of the pre-processing for change point detection. The rows of the $3 \times 3$ grid 
correspond to three different scenarios: few background change points, many background change-points, and
few condition-specific changes but many background changes. From left to right, the columns 
shown the AUC without pre-processing (baseline), the AUC with pre-processing and a comparison
between the two (improvement). In each panel, we vary the strength of the background 
non-stationarity along the horizontal axis ($\nu_\text{shared}$), and the strength of the condition-specific 
non-stationarity along the vertical axis ($\nu_\text{specific}$). The precise parameters for transition are in the top row $p_\text{shared} = 0.8, p_\text{specific} = 0.8$, in the middle row $p_\text{shared} = 0.6, p_\text{specific} = 0.6$ and in the bottom row $p_\text{shared} = 0.1, p_\text{specific} = 0.8$.
The middle column pictures are similar because once the correct background non-stationary space has been correctly found and the background non-stationarities projected out (which is possible in all three cases) then the difficulty of the task is equal in all three simulations on this space.
\label{fig:temp_seg_class}}
\end{center}
\end{figure}

Here, we test the hypothesis that the performance of 
change-point detection can benefit from removing non-stationary directions (feature extraction) that 
are not specific to the condition of interest. Thus we suggest a \emph{pre-processing} step which 
aims to boost the performance of change-point detection, when the model applies.
Previous work 
\cite{blybunmeimul12feature_DISCRIM} has shown that removing stationary directions in a pre-processing step 
improves the performance of change-point detection. Here, we go one step further and distinguish between 
relevant and irrelevant non-stationary changes by making use of data from other conditions 
during feature extraction to remove the background \emph{non-stationary changes}.The problem 
setting is illustrated in Figure~\ref{fig:TempSegIllusBack}; here we see that infrequent and weak condition specific
changes are masked by stronger and more frequent background changes. This renders difficult detecting
the condition specific changes reliably. The solution we propose is to project the data to the background stationary
sources.

The synthetic data is generated as described in Section~\ref{sec:data}. The total number of dimensions
is 10, the number of conditions is 10 and the number of non-stationary directions is 2 and 1
for the background and the condition-specific systems. In the simulations, we vary
the strength of the non-stationarity in both systems ($\nu_\text{shared}$ and $\nu_\text{specific}$), 
the likelihood of the relevant change points in the condition of interest ($p_\text{specific}$), and the likelihood 
of irrelevant change points in the background system ($p_\text{shared}$). 

As a baseline, we directly apply the change-point detection algorithm to the data from the condition 
of interest and discard the remaining nine data sets. We compare this approach against the augmentation
of this algorithm by our pre-processing (feature extraction). More specifically, we apply SSA to 
estimate the nine most stationary components \textit{over all conditions} and then apply the change-point 
detection algorithm to these components on the data from the condition of interest. We measure the performance 
in terms of the area under the receiver-operator curve (AUC).

The results are shown in Figure~\ref{fig:temp_seg_class}. The right column of the $3 \times 3$ grid 
shows the improvement in AUC due to our pre-processing for three different scenarios. When there 
are few background irrelevant change points (top row) \emph{and} few condition specific change points, the performance of change-point detection is 
relatively unaffected, there is improvement using a preprocessing step only when the condition shared changes are pronounced enough
to yield reasonable estimation of the desired projection. 

In the second row the case in which there are many irrelevant background change points as well as many condition-specific change points of interest.
which interfere with the detection of the relevant change points in the conditions of interest.
Thus, in the middle panel of the right column, we see that the proposed pre-processing leads to a 
significant improvement in performance for almost all cases. However, when both systems have
low non-stationarity (bottom left corner of the panel), 
it is hard to identify the informative 
background stationary components; hence the performance does not increase. 

The last simulations whose results are displayed in the bottom row, confirm that it is the frequency
of the changes in the distracting condition \emph{shared} system which determine whether a high 
level of improvement is to be expected. Here, as in the second row, the frequency of background
changes is high and the frequency of the desired condition-\emph{specific} changes is low.
As in the second row, improvement is to be expected in all cases except the cases in which
the strength of the background changes are not sufficient to yield reliable estimation
of the shared stationary projection.

%
%
%
%

\section{Application to Brain Computer Interfacing Data}
\label{sec:application}
A \emph{brain computer interface} is a device which allows a human
subject to perform a limited number of commands on a computer using
only the activity observable using a brain imaging device, such as an
fMRI scanner, NIRS imaging system or most commonly an EEG
(electroencephalography) cap. In the motor imagery paradigm for EEG
\cite{oai:eprints.pascal-network.org:3318,BlankertzNIMG2007}, machine
learning algorithms may be trained on data recorded during
imaginations of hand movements corresponding to the desired direction
of movement of a cursor on a computer screen. One practical problem
with such a device is, however, that the distribution over scalp
voltages in electroencephalography (EEG) recordings is highly
non-stationary. This has the effect that Brain Computer Interfaces
based on machine learning degrade in their performance over time
\cite{oai:eprints.pascal-network.org:5102,oai:eprints.pascal-network.org:3317,oai:biomedcentral.com:1471-2202-10-S1-P85,carmenNeuralComp}. 
To deal with this problem, to maintain high levels of performance,
numerous machine learning methodologies have been proposed, based on unsupervised adaptation \cite{oai:eprints.pascal-network.org:5102}, robustification \cite{oai:eprints.pascal-network.org:3317},
penalization of non-stationarity \cite{conf/icassp/WojcikiewiczVK11,sCSP} and covariate shift adapation \cite{Sugiyama:2007:CSA:1314498.1390324}.
An
interesting \emph{neuroscientific} question, however, is to diagnose the physical source of
this non-stationarity.  On the one hand it may be the case that
non-stationarity results to a high degree from artifactual sources,
such as EMG activity, ocular activity or disturbances resulting from
loose electrodes \cite{conf/icassp/WojcikiewiczVK11,sCSP, GroupWiseSSA}.  On
the other hand, the fact remains that the activity in the EEG
recording resulting from neural activity is also highly
non-stationary, albeit to a lesser degree and often over a different
timescale than the non-stationarity originating in artifacts.  The
targeted study of each of these non-stationary contributions has been
until recently been confounded by the fact that only a linear mixture
of their respective activities may be observed at the scalp
electrodes. In particular, the contribution of the neural component to
the overall non-stationarity is typically weaker than the contribution
of the artifactual activity, making, in particular, the targeted study
of neural non-stationarity especially problematic.  Nevertheless, the
model studied in the present contribution presents the possibility of
prying apart neural and artifactual non-stationarity. Given that this
may be achieved, there exists the prospect of better understanding the
neural changes which occur during EEG based BCI use.

To illustrate this possibility we display scalp maps in Figure~\ref{fig:bci_changes} of a BCI subject. 
For each condition (left imagination and foot imagination) the most non-stationary pattern in that condition
\emph{subsequent} to the removal of a number, $m$, of shared non-stationary directions has been visualized. The figure
displays these patterns for $m = 0,1,\dots,6$.
We observe in the results
that the model studied above is applicable to the data set at hand: namely, there is a common
non-stationary component to both of the condition systems. When no or few class-shared non-stationarities are removed,
then the patterns are similar between the two conditions.  The structure of the displayed patterns indicate that the activity generating
this non-stationarity corresponds to an EMG artifact.
We see, however,
by increasing the number of common non-stationary sources removed, that 
differences in the patterns between the estimated condition non-stationarities emerge. 
For example, the patterns in the left hand condition, after removal of six background non-stationary sources, display high weights contralateral to the imagined hand and are smooth in their topography, in contrast to the 
most non-stationary patterns. These facts indicate that the patterns may contain information regarding \emph{task relevant neural} changes
observable in the data from each condition. 

We now extend our analysis to include the data from all 80 subjects recorded for the study at hand \cite{NeuroPred,PhysPred}.
In order to illustrate the usefulness of the proposed model for a general BCI motor imagery subject, we use the following heuristic
to quantify to what extent the model is suitable for any given dataset:

\begin{enumerate}
\item We compute a similarity measure between each of the 2
condition specific most non-stationary patterns for every choice of the dimensionality of background non-stationary subspace, $d_\s$, including 
the case where there are assumed to be \emph{no} background non-stationary sources. An example of the results of such a computation 
are displayed in Figure~\ref{fig:NonStatSpectrum}
\item We then compute the \emph{difference} in the average similarity of the condition specific non-stationary patterns for $d_\s=5, \dots, 10$ and the similarity
of the condition specific patterns for $d_\s = 0$.
\item If this difference is \emph{greater} than a predefined threshold, then we conclude that the model is applicable for this dataset.
\end{enumerate}

An important point to note here, which is important for the validity of the heuristic, is that in the null case in which the model does not apply, then the
similarity between the condition specific patterns for $d_\s = 0$ is on average smaller than the similarity between the condition specific patterns for 
$d_\s = 5,\dots,10$.
The results of applying this heuristic to all 80 BCI motor imagery subjects is displayed in Figure~\ref{fig:EffectAcrossSubjects}.
They show for a significant fraction of subjects there are clear background non-stationary components; this may be concluded
from the fact that the most non-stationary components estimated on the conditions separately are identical, up to estimation
error and noise. This implies that the model we propose in this paper is highly appropriate for studying the changes in distribution of the EEG 
of BCI subjects. 

Note at this point that this heuristic is designed to indicate the usefulness of the model for the dataset. A more principled approach 
must wait for further work due to the advanced statistical exposition and innovation such an approach would imply. For example, an approach which compares the similarity 
of vectors over different dimensionalities must correct for this discrepancy, see, for example \cite{PrincipleAngle} for theorems facilitating this correction. 

One clear limitation of the dataset is that only two conditions are available. Finding the background non-stationarities
is thus only possible because the condition specific non-stationarities are weaker than the background non-stationarities, for most subjects in this dataset.
Occasionally the condition specific non-stationarities are not weaker, see Figure~\ref{fig:EffectAcrossSubjects} and Figure~\ref{fig:subject_selection}. 
Here we observe that a strong condition specific non-stationary component is present. For such a subject, the presence
of more experimental conditions would guarantee that this component is nevertheless not neglected.
However, despite this, we see that in a the majority of cases the condition specific components are \emph{not} stronger than the background components and
the sucessful removal of the background components is possible, and, thus that the model is suitable for this type of dataset. 

\begin{figure}
\begin{centering}
\includegraphics[width=120mm]{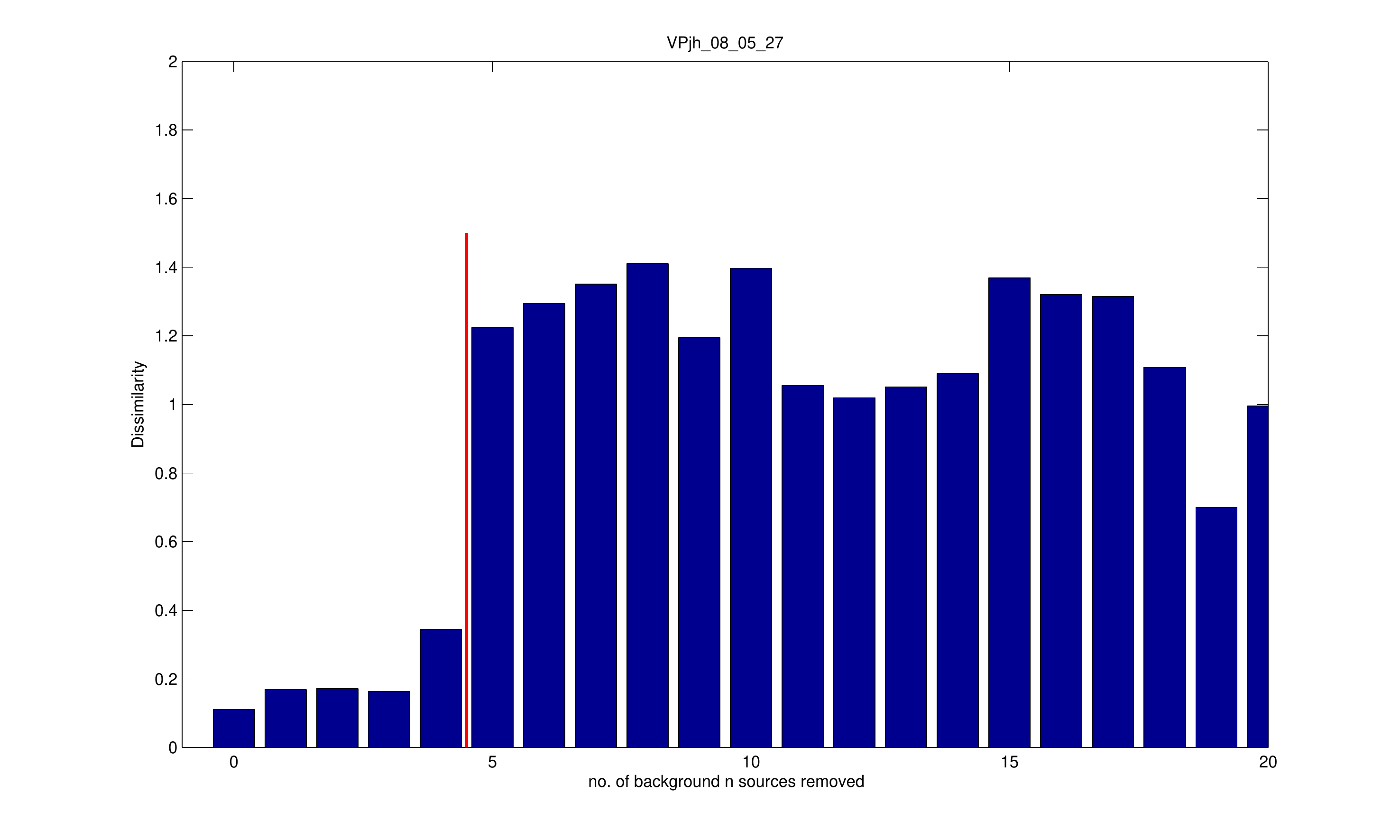}
\caption{The figure displays the results for a single subject (VPjh) of computing Step 1 of the heuristic for determining applicability of the model for a given dataset. 
The $x$-axis displays the number of background sources removed and the $y$-axis displays the dissimilarity (1-norm between normalized patterns) between the condition specific non-stationary patterns on each of the two classes.
The red line indicates the transition from a non-random level of dissimilarity to a random level, suggesting that the number of background non-stationary sources is five.
 \label{fig:NonStatSpectrum}}
\end{centering}
\end{figure}

\begin{figure}
\includegraphics[width=120mm,clip=true,trim=30mm 25mm 25mm 40mm]{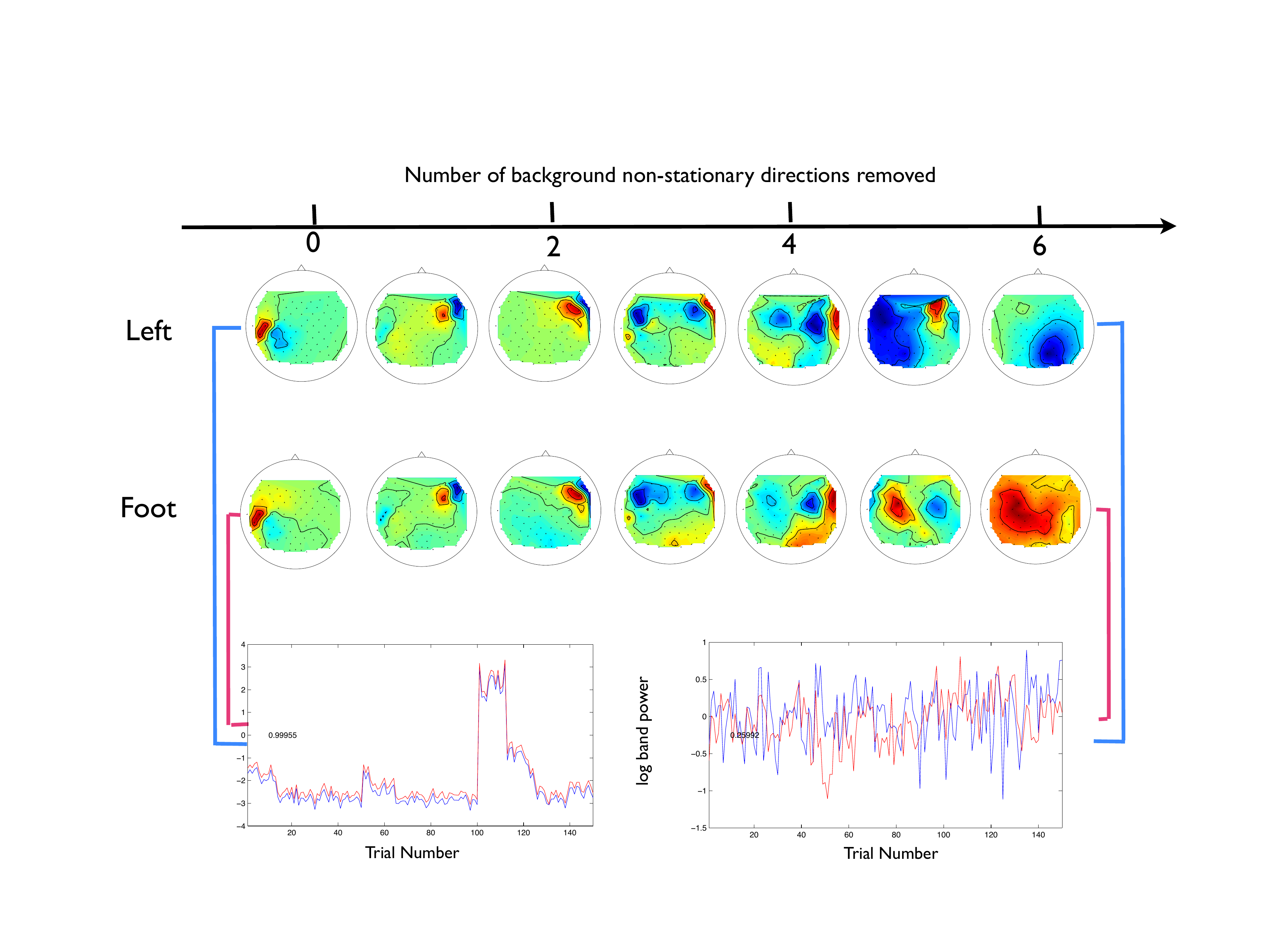}
\caption{The figure displays scalp maps from a single BCI subject. The upper row displays maps from the first condition, whereas the second row from the second condition.
The conditions are ordered in terms of how many jointly non-stationary components were removed. The numbers removed are, from left to right, 0,1,...,6.
The difference between the conditions becomes clearly visible only after 5 directions have been removed. The similarity between the preceding components
is clearly visible, between conditions. The traces at the
bottom of the figure, colored red and blue corresponding to the condition, display the log band power recorded on each trial over the course of the experiment. The highest components
are highly correlated, the lower, uncorrelated. These band-powers are calculated on the data $ Q_k^n P^\s U \mathbf{x_k}(t)$.
 \label{fig:bci_changes}}
\end{figure}

\begin{figure}[h]
\begin{center}
$\begin{array}{c} \includegraphics[width=100mm,clip=true,trim=3mm 86mm 10mm 90mm]{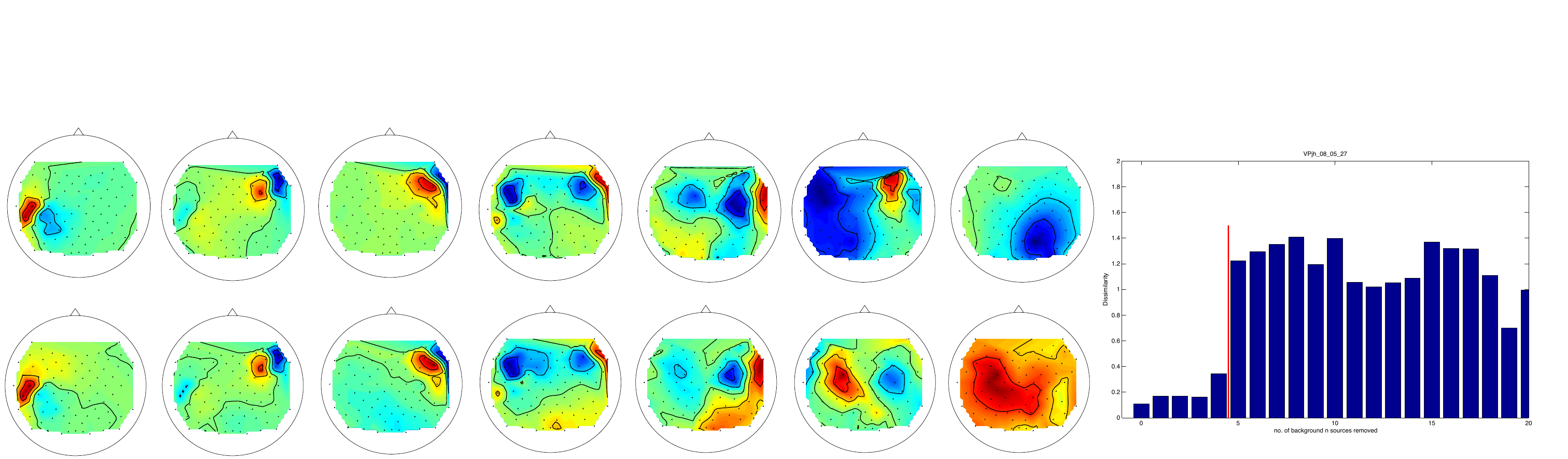} \\
 \includegraphics[width=110mm,clip=true,trim=20mm 101mm 10mm 90mm]{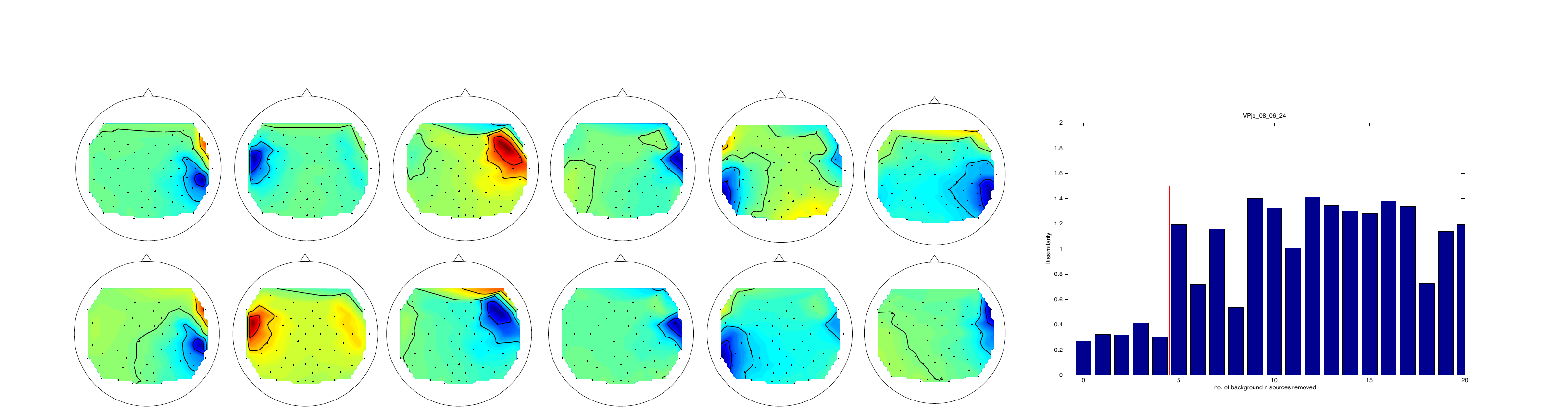} \\
  \includegraphics[width=102mm,clip=true,trim=58mm 95mm 10mm 94mm]{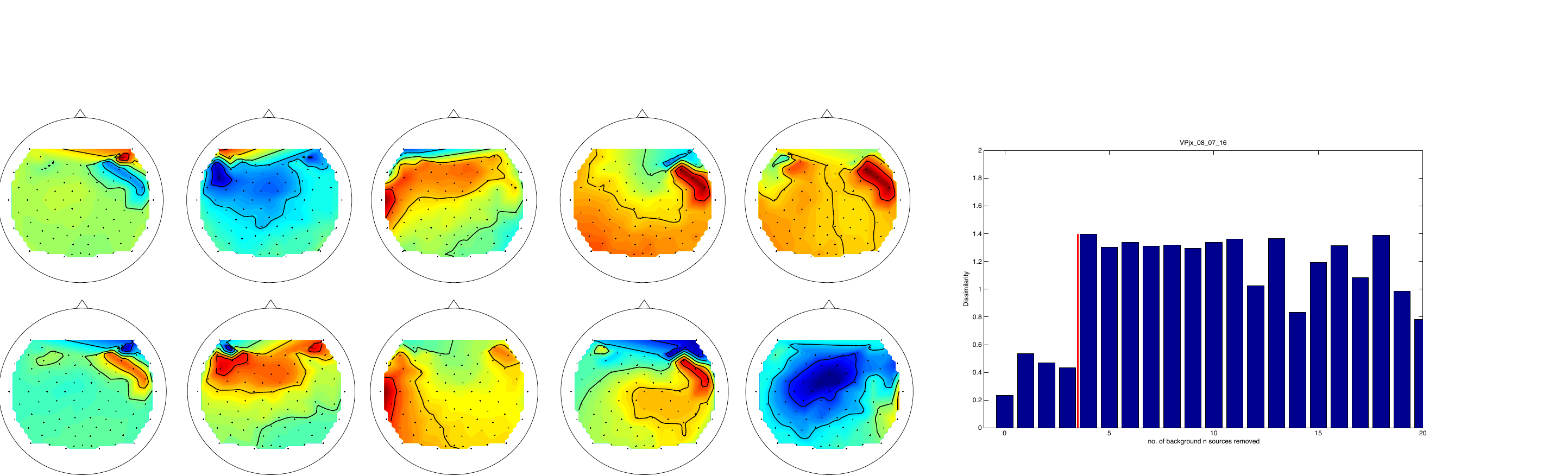} \\
\includegraphics[width=100mm,clip=true,trim = 30mm 82mm 10mm 105mm]{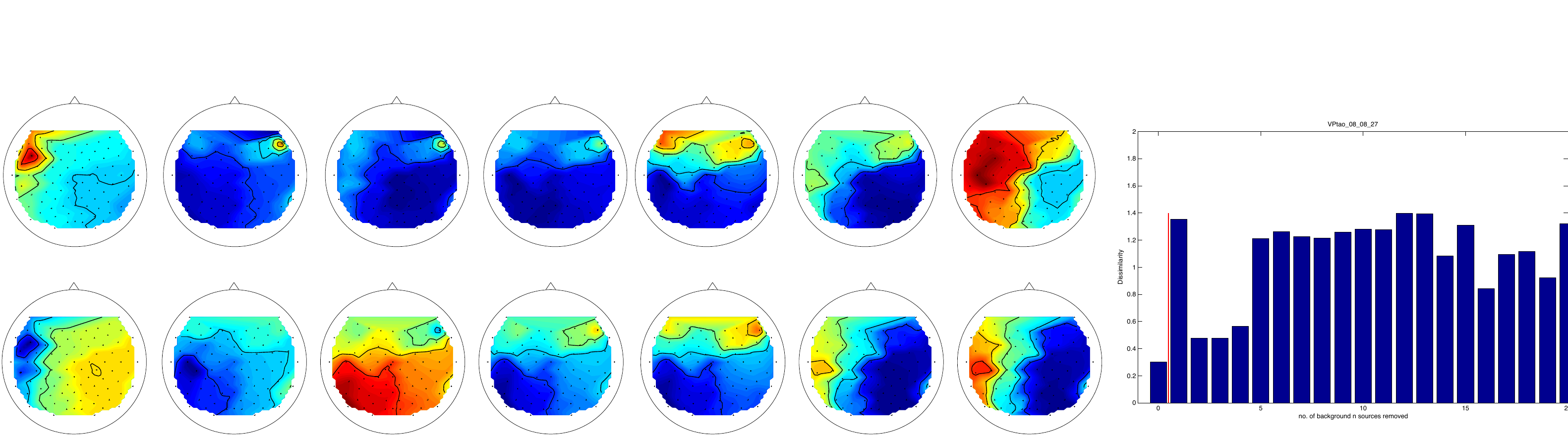}
\end{array}$
\end{center}
\caption{The figure diplays three subjects (the top three, VPjh, VPjo, VPjx) with clear background non-stationary directions. The scalp maps are displayed on the left and the 
similarity between components on the right as per Figure~\ref{fig:NonStatSpectrum}. The fourth and final subject (bottom, VPtao) displays background  non-stationary directions
but also a highly non-stationary condition specific component. The component is located frontal right in the first condition, and has a localized topography. We conjecture 
this is caused by an EMG artifact which is biased towards the first condition. 
The red line on the right displays the number of background non-stationary components estimated by the heuristic. This choice is visible in the similarity of the patterns displayed on the right; in the uppermost row,
we display an extra pair of patterns to display the differentiation and class structure in the condition specific patterns. In the bottom row we display additional patterns to show that, in this case, the heuristic's choice is conservative.
\label{fig:subject_selection}}
\end{figure}

\begin{figure}
\begin{centering}
\includegraphics[width=120mm]{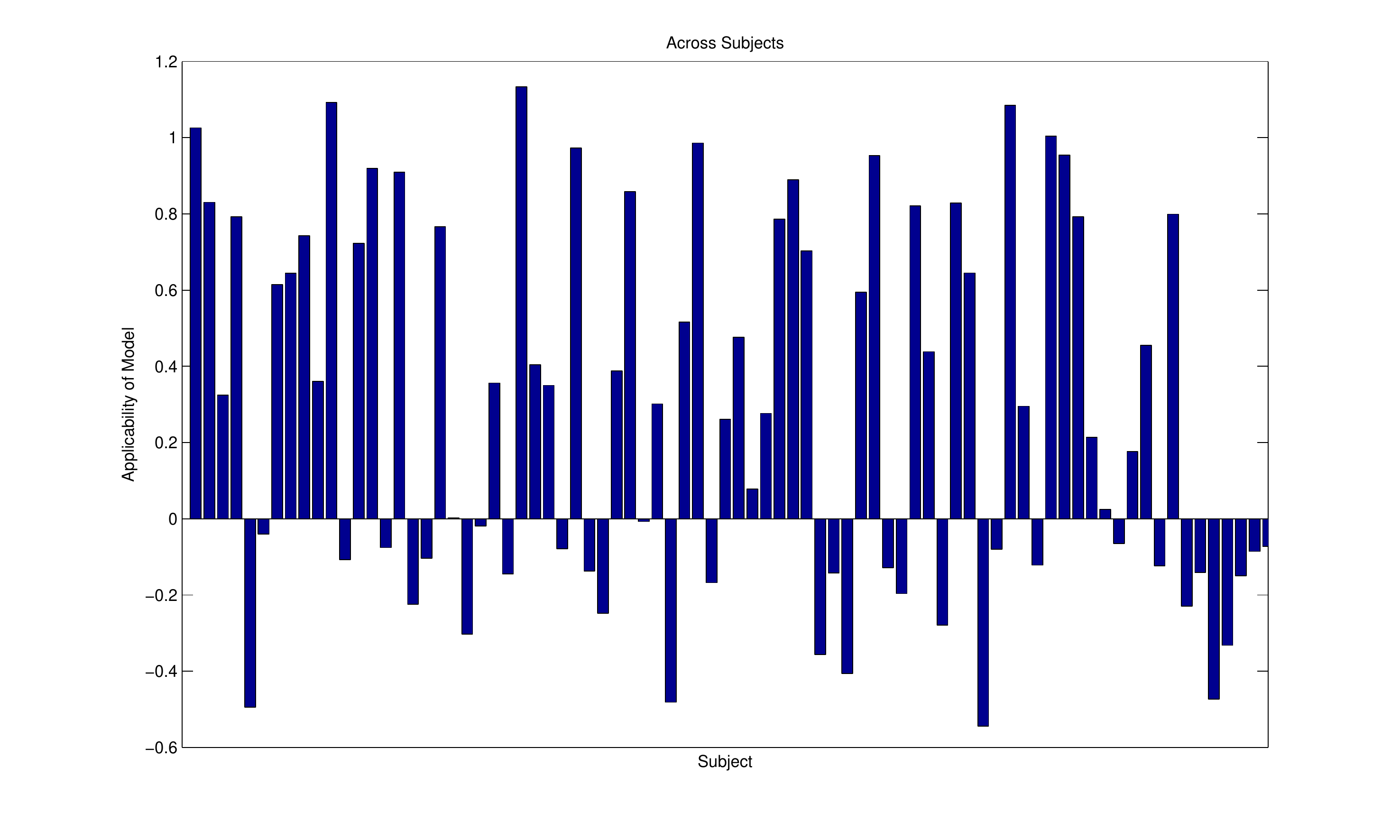}
\end{centering}
\caption{The figure displays the results of applying the heuristic for estimating the applicability of the model. The $y$-axis shows the difference in the average similarity of the condition specific non-stationary patterns for $d_\s=5, \dots, 10$ and the similarity
of the condition specific patterns for $d_\s = 0$. The $x$-axis displays the subject number. The figure shows that for a large proportion of the subjects in the dataset, the model is applicable.
\label{fig:EffectAcrossSubjects}}
\end{figure}

%
%
%
%

\section{Concluding Discussion}
\label{sec:conc}
Neural signals are subject to change which is typically induced by the
experimental paradigm. In addition, typically a number of artifactual
signals are present in an experiment that hamper a possible
interpretation: they may consist of drifts, or also other more complex
factors that are common between behavioural conditions. This work has
contributed with its explorative framework which enables the removal of these
uninteresting non-stationary components. Thus, for the first time
condition specific non-stationarities can be extracted and
physiologically interpreted. In particular this was demonstrated on simulated EEG examples from
a realistic forward model as well as on real experimental data from a
large scale BCI study.

An important assumption that needs to be made is that the condition
specific non-stationary patterns $B_k^\n, B_{k'}^n$ should
be independent for $k \neq k'$. If these are highly dependent, then no
estimation of $P^\s$, the projection to the background stationary
sources, is possible. If independence does not hold then 
the condition-specific features of interest may significantly overlap in
their projection with the condition shared non-stationarities, and
information may be lost. The requirement for independence is, however, an intrinsic feature of
the model and not a deficiency of the estimation method. If independence isn't assumed in \emph{some}
form then estimation by any method is impossible.
However, as discussed in Section~\ref{sec:con_proof},
the assumption  of independence of origin is realistic, in particular since the
non-stationarities specific to each condition result from separate
underlying physical systems. Note that it will often be possible to know
a priori if these assumptions are likely to hold for a particular application.
For example, in BCI, different brain regions govern the control of separate motor
imaginations and thus the condition specific non-stationarity patterns should be 
dissimilar from one another. If the assumption of independence of origin does not hold, then, estimation may still be possible under a slightly difference independence assumption.
This is because it may be plausibly assumed that 
the evolution of non-stationarity in the \emph{time-domain} is independent in each condition-specific system.
This should also guarantee consistency; we postpone the proof of this result to a later publication.

Another important assumption we make is that no
condition should be allowed to contribute disproportionately to the
non-stationarity in the overall data. As we argued earlier, this
assumption is realistic in applications, in which the background
non-stationarity, for example, artifacts in EEG recordings, are
stronger than the condition specific non-stationarities. Moreover, this assumption is related to assuming that
the background non-stationarities are stronger than the condition-specific non-stationarities. If it is the case that one condition-specific system 
exhibits high non-stationarity, then the confounding influence of the background non-stationarity is reduced for this class in any case.
Thus making special allowance for background non-stationarity in such a case is less pressing.

After \emph{removing} the background non-stationary activity
from a data set, further analysis must be confined to a subspace of
the orthogonal complement in data space of the removed directions. 
However, as we demonstrated in our change-point detection analysis,
this subspace may yield important performance gains over analysis
on the entire space.

In summary, we have shown that the model and methods we propose 
boosts the performance of change point detection and allows the data analyst to
discover condition specific changing components in BCI data. 
Future studies will apply the novel framework for analysing
experimental data to primate experiments that study learning and
plasticity. In addition, we will study condition specific
non-stationarities in multimodal data \cite{biessmannreview} and in
co-adaptive BCI \cite{carmenNeuralComp}. It should also prove interesting to
localize the respective sources of different non-stationary nature
\cite{HauNikZieMueNol08, HauTomDicSanBlaNolMue10}.

\section*{Acknowledgements}
The authors would like to thank Daniel Bartz, Franz Kiraly and Stefan Haufe for their valuable comments on the topic of this paper. 
D.A.J.B was supported in part by a grant from the research training group GRK 1589/1 "Sensory Computation in Neural Systems" and by the Bernstein Focus Neuro-technology (BFNT), funded by the BMBF, FKZ 01GQ0850; 
F.C.M. was supported by BMBF and DFG in the German-Japanese research project "Robust Adaptive BCI in non-stationary environments" FKZ 01GQ1115.
; K.-R.M was supported in part by the World Class University Program through
the National Research Foundation of Korea funded by the Ministry of
Education, Science, and Technology, under Grant R31-10008. 

\newpage
\appendix
\appendixpage
\section{Proof of Proposition 1}
\label{app:proofs}
\begin{proof}

The proof strategy is as follows: if we show that the statement holds for each of the non-stationary projections $Q^\n_k$ simultaneously, where $Q_k = (B_k)^{-1}$, then clearly the statement holds for a random projection $P$.
We need to show that the projection of the system by $Q_k^\n$ is stationary, for all $K$, in the limit of $K$ and $D$.

For any epoch indexed by $i$, any class index $k'$:

\begin{eqnarray*}
\vectornorm{\frac{1}{K} \sum_{k=1}^K Q_{k'}^\n(R_i^k - \bar{R}_k)Q_{k'}^\n} &\leq& \frac{1}{K}\vectornorm{R_i^{k'} - \bar{R}_{k'}} + \frac{1}{K} \sum_{k \neq k'}  \vectornorm{Q_{k'}^\n(R_i^k - \bar{R}_k)Q_{k'}^\n} \\
&\leq& \frac{1}{K}\Gamma + \frac{\kappa}{K}\sum_{k\neq k'} \vectornorm{S^\top I_{d_n} S}
\end{eqnarray*}

The right hand quantity in the first line is bounded above by the quantity on the right in the second line, $\frac{\kappa}{K}\sum_{k\neq k'} \vectornorm{S^\top I_{d_n} S}$ where $S$ is a $d_n$ dimensional random projection matrix, $I_{d_n}$ is a matrix of zeros apart from the left, top hand $d_n$ rows and columns which are equal to the identity matrix and $\kappa$ is a constant.
The random variable on the right under the sum, $ \vectornorm{S^\top I_{d_n} S}$ has mean $\mu_D$ of order $\frac{1}{D}$ and standard deviation $\sigma_D$ of order $\frac{1}{\sqrt{D}}$.
This is because as $D$ grows, for fixed $d$, the rows of the matrix $P$ tend to independent samples from the distribution over one dimensional subspaces. The dot product between random one dimensional subspaces has density of order $\sqrt{D}x^{-1}(1-x^2)^{(D-3)/2}$  with moments of order $1/\sqrt{D}$
 \cite{PrincipleAngle}.
Each row in the matrix $P^\top I_{d_n}$, is a dot product, between basis vectors of unit length of random one-dimensional subspaces. Thus, each row has norm bounded by constant order. Moreover the effect of the truncated matrix in the second term is to truncate the entries in $P$, so that each entry of the entire matrix $P^\top I_{d_n} P$ may be approximated as dot products between random subspaces and the entries may be assumed independent. Thus the 
 mean of each term in the Froebenius norm is at most of the order of the variance of the dot product between spanning vectors of random one dimensional subspaces and the standard deviation at most of order of the square root of the fourth moment, which both decrease as $1/\sqrt{D}$ and $1/(D^{1/4})$.
Thus the sum term is distributed in the limit of $K$ with standard deviation bounded by $\frac{\beta}{\sqrt{K}D^{1/4}}$ and mean by $\frac{q}{\sqrt{D}}$ for some constants $q$ and $\beta$. 

Thus by Chebyshev's inequality:

%
%

\begin{equation*}
\text{Pr}(\frac{1}{K} \sum_{k\neq k'}  \vectornorm{S^\top I_{d_n} S} > \epsilon/\kappa + \frac{q}{\sqrt{D}}) <  \frac{\kappa^2\beta^2}{\epsilon^2 K \sqrt{D}}
\end{equation*}

Which implies,

\begin{equation*}
\text{Pr}( \frac{1}{K}\Gamma + \frac{\kappa}{K}\sum_{k\neq k'} \vectornorm{S^\top I_{d_n} S}> \frac{\kappa}{K}\Gamma + \epsilon + \frac{c}{\sqrt{D}}) < \frac{\kappa^2\beta^2}{\epsilon^2 K \sqrt{D}}
\end{equation*}

Then by Boole's inequality the probability that each one of these inequalities holds gives that
for any $k$:

\begin{equation*}
\text{Pr}( \frac{1}{K}\Gamma + \frac{\kappa}{K}\sum_{k\neq k'} \vectornorm{S^\top I_{d_n} S}> \frac{1}{K}\Gamma + \epsilon + \frac{c}{\sqrt{D}}) < \frac{\kappa^2\beta^2}{\epsilon^2 \sqrt{D}}
\end{equation*}

Putting all of this together yields the estimate:

\begin{equation}
\text{Pr}\left(\vectornorm{\frac{1}{K} \sum_{k=1}^K Q_{k'}^1(R_i^k - \bar{R}_k)Q_{k'}^1} > \mathcal{O}(\frac{1}{K}) + \mathcal{O}(\frac{1}{\sqrt{D}}) + \epsilon\right) <  \frac{\kappa^2\beta^2}{\epsilon^2 \sqrt{D}}
\end{equation}

Which is:

\begin{equation}
\text{Pr}\left(\vectornorm{\frac{1}{K} \sum_{k=1}^K Q_{k'}^1(R_i^k - \bar{R}_k)Q_{k'}^1} > \mathcal{O}(\frac{1}{K}) + \mathcal{O}(\frac{1}{\sqrt{D}}) + \mathcal{O}(\frac{1}{\sqrt{\delta \sqrt{D}}})\right) <  \delta
\end{equation}

\end{proof}

\bibliographystyle{plain}
{\small
\bibliography{../bib/Testing,../bib/stable,../bib/ssa_pubs,../bib/segmentation,../bib/class_discrim,../bib/supervised,../bib/LFP}}

\begin{thebibliography}{10}

\bibitem{PrincipleAngle}
P.A. Absil, A.~Edelman, and P.~Koev.
\newblock On the largest principal angle between random subspaces.
\newblock {\em Linear Algebra and its Applications}, 414:288--294, 2006.

\bibitem{Speech1}
R.~Andre-Obrecht.
\newblock A new statistical approach for the automatic segmentation of
  continous speech signals.
\newblock {\em IEEE Trans.~Acoustics, Speech, Signal Processing}, 36(1):29--40,
  1988.

\bibitem{biessmannreview}
F.~Bie{\ss}mann, S.M. Plis, F.C. Meinecke, T.~Eichele, and M\"uller K.R.
\newblock Analysis of multimodal neuroimaging data.
\newblock {\em IEEE Reviews in Biomedical Engineering}, 4:26--58, 2011.

\bibitem{BlankertzNIMG2007}
B.~Blankertz, G.~Dornhege, M.~Krauledat, K.-R. M\"uller, and G.~Curio.
\newblock The non-invasive {Berlin Brain-Computer Interface}: Fast acquisition
  of effective performance in untrained subjects.
\newblock {\em NeuroImage}, 37:539--550, 2007.

\bibitem{oai:eprints.pascal-network.org:3317}
B.~Blankertz, M.~Kawanabe, R.~Tomioka, F.~Hohlefeld, V.~Nikulin, and K.R.
  M{\"u}ller.
\newblock Invariant common spatial patterns: Alleviating nonstationarities in
  brain-computer interfacing.
\newblock {\em Advances in Neural Information Processing Systems}, 20:113--120,
  2008.

\bibitem{NeuroPred}
B.~Blankertz, C.~Sannelli, S.~Halder, E.M. Hammer, A.~K\"ubler, K.R. M\"uller,
  G.~Curio, and T.~Dickhaus.
\newblock Neurophysiological predictor of smr-based bci performance.
\newblock {\em Neuroimage}, 51:1303--1309, 2010.

\bibitem{oai:eprints.pascal-network.org:3318}
B.~Blankertz, R.~Tomioka, S.~Lemm, M.~Kawanabe, and K-R. M{\"u}ller.
\newblock Optimizing spatial filters for robust {EEG} single-trial analysis.
\newblock {\em IEEE Signal Processing Magazine}, 25(1):41 -- 56, 2008.

\bibitem{oai:biomedcentral.com:1471-2202-10-S1-P85}
B.~Blankertz and C.~Vidaurre.
\newblock Towards a cure for {BCI} illiteracy: machine learning based
  co-adaptive learning.
\newblock {\em BMC Neuroscience}, 10:85, July 2009.

\bibitem{blybunmeimul12feature_DISCRIM}
D.~A.~J. Blythe, P.~von B\"unau, F.C. Meinecke, and K.-R. M\"uller.
\newblock Feature extraction for change-point detection using stationary
  subspace analysis.
\newblock {\em IEEE Transactions on Neural Networks and Learning Systems},
  23(4):631--643, 2012.

\bibitem{Celka20021}
P.~Celka and P.~Colditz.
\newblock Time-varying statistical dimension analysis with application to
  newborn scalp {EEG} seizure signals.
\newblock {\em Medical Engineering and Physics}, 24(1):1 -- 8, 2002.

\bibitem{neuralVariabilityPremotor}
M.M. Churchland, B.M. Yu, S.I. Ryu, G.~Santhanam, and K.V. Shenoy.
\newblock Neural variability in premotor cortex provides a signature of motor
  preparation.
\newblock {\em Journal of Neuroscience}, 26:3697--3712, 2006.

\bibitem{Comon1994287}
P.~Comon.
\newblock Independent component analysis, a new concept?
\newblock {\em Signal Processing}, 36(3):287 -- 314, 1994.

\bibitem{Econ1}
H.~Cs\"org\"o and L.~Horv{\'a}rth.
\newblock Nonparametric methods for change point problems.
\newblock In P.R. Krishnaiah and C.R. Rao, editors, {\em Handbook of
  statistics}, volume~7, pages 403--425. Elsevier, New York, 2009.

\bibitem{Kopfmodel1}
VS~Fonov, AC~Evans, K~Botteron, CR~Almli, RC~McKinstry, DL~Collins, and the
  Brain Development Cooperative~Group.
\newblock Unbiased average age-appropriate atlases for pediatric studies.
\newblock {\em NeuroImage}, 54:313--327, January 2011.

\bibitem{Gower69SingleLinkage}
J.~C. Gower and G.~J.~S. Ross.
\newblock Minimum spanning trees and single linkage cluster analysis.
\newblock {\em Journal of the Royal Statistical Society}, 18(1):54--64, 1969.

\bibitem{PhysPred}
E.M. Hammer, S.~Halder, B.~Blankertz, C.~Sannelli, T.~Dickhaus, S.~Kleih, K.-R.
  M\"uller, and A.~K\"ubler.
\newblock Psychological predictors of {SMR-BCI} performance.
\newblock {\em Biological Psychology}, 89:80--86, 2012.

\bibitem{HauNikZieMueNol08}
S.~Haufe, V.~V. Nikulin, A.~Ziehe, K.-R. M\"uller, and G.~Nolte.
\newblock {{C}ombining sparsity and rotational invariance in
  {E}{E}{G}/{M}{E}{G} source reconstruction}.
\newblock {\em NeuroImage}, 42(2):726--738, Aug 2008.

\bibitem{HauTomDicSanBlaNolMue10}
S.~Haufe, R.~Tomioka, T.~Dickhaus, C.~Sannelli, B.~Blankertz, G.~Nolte, and
  K.-R. M\"uller.
\newblock Large-scale {EEG}/{MEG} source localization with spatial flexibility.
\newblock {\em NeuroImage}, 54:851--859, 2011.

\bibitem{switch_dynamics}
J.~Kohlmorgen, K.~R. M\"uller, J.~Rittweger, and K.~Pawelzik.
\newblock Identfication of nonstationary dynamics in physiological recordings.
\newblock {\em Biological Cybernetics}, 83:73--84, 2000.

\bibitem{Mandelblatt:2011q}
Y.~Mandelblat-Cerf, I.~Novick, R.~Paz, Y.~Link, S.~Freeman, and E.~Vaadia.
\newblock the neuronal basis of long term sensorimotor adaptation.
\newblock {\em Journal of Neuroscience}, 31(1):300--313, Jan. 2011.

\bibitem{LeadFields}
G.~Nolte and G.~Dassios.
\newblock Analytic expansion of the {EEG} lead field for realistic volume
  conductors.
\newblock {\em Phys. Med. Biol.}, 50:3807--3823, 2005.

\bibitem{LinearEEG}
L.C. Parra, C.D. Spence, A.D. Gerson, and P.~Sajdac.
\newblock Recipes for the linear analysis of {EEG}.
\newblock {\em NeuroImage}, 28:326--341, March 2005.

\bibitem{timeRelatedAudio}
T.~Paus, R.J. Zatorre, N.~Hofle, Z.~Caramanos, J.~Gotman, M.~Petrides, and A.C.
  Evans.
\newblock Time-related changes in neural systems underlying attention and
  arousal during the performance of an auditory vigilance task.
\newblock {\em Journal of Cognitive Neuroscience}, 9:392--408, 2006.

\bibitem{Econ2}
K.~Pawelzik, J.~Kohlmorgen, and K.-R. M{\"u}ller.
\newblock Annealed competition of experts for a segmentation and classification
  of switching dynamics.
\newblock {\em Neural Computation}, 8(2):340--356, 1996.

\bibitem{originalPCA}
K.~Pearson.
\newblock On lines and planes of closest fit to systems of points in space.
\newblock {\em Philosophical Magazine}, 2:559,572, 1901.

\bibitem{Pearson1901Lines}
K.~Pearson.
\newblock On lines and planes of closest fit to systems of points in space.
\newblock {\em The London, Edinburgh and Dublin Philosophical Magazine and
  Journal of Science}, 2:559--572, 1901.

\bibitem{EEG_chaos}
J.P. Pijn, J.V. Neerven, A.~Noest, and F.H. Lopes~da Silva.
\newblock Chaos or noise in {EEG} signals; dependence on state and brain site.
\newblock {\em Electroencephalography and Clinical Neurophysiology}, 79(5):371
  -- 381, 1991.

\bibitem{GroupWiseSSA}
W.~Samek, M.~Kawanabe, and C.~Vidaurre.
\newblock Group-wise stationary subspace analysis - a novel method for studying
  non-stationarities.
\newblock In G.~R. M{\"u}ller-Putz, R.~Scherer, M.~Billinger, A.~Kreilinger,
  V.~Kaiser, and C.~Neuper, editors, {\em Proceedings of the 5th International
  BCI Conference - Graz}, pages 16--20. Verlag der Technischen Universit{\"a}t
  Graz, 2011.

\bibitem{sCSP}
W.~Samek, C.~Vidaurre, K.-R. M\"uller, and M.~Kawanabe.
\newblock Stationary common spatial patterns for brain computer interfacing.
\newblock {\em Journal of Neural Engineering}, 9:026013, 2012.

\bibitem{BCIadapt}
P.~Shenoy, M.~Krauledat, B.~Blankertz, R.P.N Rao, and K.-R. M\"uller.
\newblock Towards adaptive classification for {BCI}.
\newblock {\em Journal of Neural Engineering}, 3:13--23, 2006.

\bibitem{Sugiyama:2007:CSA:1314498.1390324}
M.~Sugiyama, M.~Krauledat, and K.R. M\"{u}ller.
\newblock Covariate shift adaptation by importance weighted cross validation.
\newblock {\em J. Mach. Learn. Res.}, 8:985--1005, December 2007.

\bibitem{oai:eprints.pascal-network.org:5102}
C.~Vidaurre, M.~Kawanabe, P.~von B\"unau, B.~Blankertz, and K.-R. M\"uller.
\newblock Towards an unsupervised adaptation of {LDA} for brain-computer
  interfaces.
\newblock {\em IEEE Transactions on Biomedical Engineering}, 58:587--597, 2011.

\bibitem{carmenNeuralComp}
C.~Vidaurre, C.~Sannelli, K.-R. M\"uller, and B.~Blankertz.
\newblock Machine-learning-based coadaptive calibration for brain-computer
  interfaces.
\newblock {\em Neural Computation}, 23:791--816, 2011.

\bibitem{PRL:SSA:2009:DISCRIM}
P.~von B\"unau, F.C. Meinecke, F.J. Kir\'aly, and K.-R. M\"uller.
\newblock Finding stationary subspaces in multivariate time series.
\newblock {\em Phys. Rev. Lett.}, 103(21):214101, Nov 2009.

\bibitem{PRL:SSA:2009}
Paul von B\"unau, Frank~C. Meinecke, Franz~J. Kir\'aly, and Klaus-Robert
  M\"uller.
\newblock Finding stationary subspaces in multivariate time series.
\newblock {\em Phys. Rev. Lett.}, 103(21):214101, Nov 2009.

\bibitem{conf/icassp/WojcikiewiczVK11}
W.~Wojcikiewicz, C.~Vidaurre, and M.~Kawanabe.
\newblock Stationary common spatial patterns: Towards robust classification of
  non-stationary {EEG} signals.
\newblock In {\em ICASSP}, pages 577--580. IEEE, 2011.

\end{thebibliography}

\end{document}